\newcommand{\ie}{i.e.{}}
\newcommand{\eg}{e.g.{}}
\newcommand{\eV}{\U{eV}}
\newcommand{\Cal}[1]{{\cal #1}}
\newcommand{\U}[1]{\,{\rm{#1}}}
\newcommand{\euler}{\mathrm e}
\newcommand{\Sum}{\sum\limits}
\newcommand{\Int}{\int\limits}
\newcommand{\differential}{\>\mathrm d}
\newcommand{\E}[1]{\times 10^{#1}}
\newcommand{\nbh}{\hbox{-}}
\newcommand{\xray}{x\nbh{}ray}
\newcommand{\Xray}{X\nbh{}ray}
\newcommand{\mat}[1]{\hbox{\boldmath{$#1$}\unboldmath}}
\newcommand{\mats}[1]{\hbox{\boldmath{$\scriptstyle{#1}$}\unboldmath}}
\renewcommand{\vec}[1]{\hbox{\boldmath{$#1$}\unboldmath}}
\newcommand{\diag}{\textbf{diag}}
\newcommand{\transpose}{{}^{\textrm{\scriptsize T}}}
\newcommand{\unitmatrix}{\mat{\mathbbm{1}}}
\newcommand{\eref}[1]{(\ref{#1})}
\newcommand{\Eref}[1]{Equation~(\ref{#1})}
\newcommand{\sref}[1]{Sec.~\ref{#1}}
\newcommand{\Sref}[1]{Section~\ref{#1}}
\newcommand{\fref}[1]{Fig.~\ref{#1}}
\newcommand{\atopa}[2]{\genfrac{}{}{0pt}{}{#1}{#2}}
\begin{document}
\title{Nonlinearity in the sequential absorption of multiple photons}
\author{Christian Buth}
\thanks{World Wide Web: \href{http://www.christianbuth.name}
{www.christianbuth.name}, electronic mail}
\email{christian.buth@web.de}
\affiliation{Theoretische Chemie, Physikalisch-Chemisches Institut,
Ruprecht-Karls-Universit\"at Heidelberg, Im Neuenheimer Feld~229,
69120~Heidelberg, Germany}
\date{13 September 2017}

\begin{abstract}
I classify multiphoton absorption into separable, linked, and simultaneous
processes.
The first and second types can be distinguished when the rate-equation
approximation is valid whereas the third type refers to the case when the full
description of multiphoton absorption is essential.
For this purpose, rate equations are solved analytically without decay
processes which shows that even if many photons are absorbed the
interaction with the light field is linear and one has the case
of separable multiphoton absorption.
Next a short-pulse approximation is investigated in which I first solve the
rate equations without decay processes and then solve only rate equations
for the ensuing decay.
Finally, the full rate equations are examined and a successive
approximation of the underlying Volterra integral equation
of the second kind is derived leading to linked
multiphoton absorption by the involved decay widths.
The three methods are applied to a nitrogen atom in intense and
ultrafast x~rays from free-electron lasers~(FELs).
The linearity theorem barely approximates the results in the presence
of decay processes which is also not satisfactorily corrected for
by the short-pulse approximation.
The successive approximation gives excellent agreement with the
numerically-exact solution of the rate equations.
\end{abstract}

\keywords{rate equations, system of linear first-order ordinary differential
equations, multiphoton process, Volterra integral equation of the second kind,
analytical solution, x~rays, nitrogen atom}

\preprint{arXiv:1612.07105}
\maketitle

\theoremstyle{plain}
\newtheorem{lemma}{Lemma}
\newtheorem{theorem}{Theorem}
\renewcommand{\qedsymbol}{q.e.d.{}}
\theoremstyle{definition}
\newtheorem{definition}{Definition}
\theoremstyle{remark}
\newtheorem{remark}{Remark}

\section{Introduction}

Rate equations which are systems of linear first-order
ordinary differential equations~\cite{Walter:GD-00} have recently
come into focus in the study of the intense and ultrafast interaction
of x~rays with matter [\eg, Refs.~\onlinecite{Rohringer:XR-07,%
Son:HA-11,*Son:EH-11,Buth:UA-12,Son:MC-12,*Son:EM-15,Rudek:UE–12,%
Rudek:RE-13,Fukuzawa:DI-13,Ho:TT-14,Liu:RE-16,Buth:NX-17}]
due to the development of \xray~lasers, especially the
state-of-the-art \xray~FELs such as the
the Linac Coherent Light Source~(LCLS)~\cite{LCLS:CDR-02,Emma:FL-10}
in Menlo Park, California, USA,
the SPring-8 Angstrom Compact free electron LAser~(SACLA)~\cite{Ishikawa:CX-12}
in Sayo-cho, Sayo-gun, Hyogo, Japan,
the SwissFEL~\cite{SwissFEL:CDR-12} in Villigen, Switzerland,
and the European X-Ray Free-Electron Laser~(XFEL)~\cite{Altarelli:TDR-06}
in Hamburg, Germany.
Such rate equations have been used for a long time in the optical
regime~\cite{LHuillier:MC-83,LHuillier:RG-83,LHuillier:LP-83,Crance:DS-85,%
Lambropoulos:ME-87,Delone:MP-00}
and, meanwhile, they have become frequently the basis for
an understanding of the interaction with x~rays,~\cite{Rohringer:XR-07,%
Hoener:FA-10,Young:FE-10,Son:HA-11,*Son:EH-11,Buth:UA-12,Liu:RE-16,Buth:NX-17,%
Obaid:FL-17} even in the presence of resonances,~\cite{Xiang:RA-12,%
Rudek:UE–12,Rudek:RE-13,Fukuzawa:DI-13,Ho:TT-14} until coherent phenomena become
important.~\cite{Rohringer:RA-08,*Rohringer:PN-08,Kanter:MA-11,Cavaletto:RF-12,%
Cavaletto:FC-13,Adams:QO-13,Cavaletto:HF-14,Li:CR-16}
This approach is required as the x~rays from FELs are so intense that
multiple x~rays can be absorbed in the course of the interaction
unlike experiments at synchrotrons which are limited to
one-\xray-photon processes.~\cite{Als-Nielsen:EM-01,Adams:QO-13}
Hence---although the interaction with the x~rays remains well-described by
few-photon absorption cross sections in the cases considered for this
work~\cite{Makris:MM-09}---a
perturbative treatment of the interaction of the \xray~pulse with matter is
typically no longer a viable approach as there is a substantial
ground-state depletion.

In \xray~science the term ``multiphoton absorption'' is frequently
used to refer to the case of the sequential absorption of several
photons [\eg, Refs.~\onlinecite{Rohringer:XR-07,Makris:MM-09,%
Hoener:FA-10,Young:FE-10,Son:HA-11,*Son:EH-11,Buth:UA-12,Son:MC-12,*Son:EM-15,%
Liu:RE-16,Buth:NX-17,Obaid:FL-17}] whereas for optical light predominantly
the simultaneous absorption of several photons is meant.~\cite{Delone:MP-00}
The attribute \emph{simultaneous}, thereby, refers to the fact that a
few-photon absorption cannot be meaningfully broken up into isolated photon
absorption events with a smaller number of photons.
\Xray~absorption may occur either non-resonantly~\cite{Rohringer:XR-07,%
Young:FE-10,Son:HA-11,*Son:EH-11,Buth:UA-12,Liu:RE-16,Buth:NX-17,Obaid:FL-17}
or resonantly-enhanced.~\cite{Xiang:RA-12,Son:MC-12,*Son:EM-15,Rudek:UE–12,%
Rudek:RE-13,Fukuzawa:DI-13,Ho:TT-14}
The latter refers to a form of resonance-enhanced multiphoton
ionization~(REMPI)~\cite{Delone:MP-00}
which is termed in this context resonance-enabled \xray~multiple
ionization~(REXMI).~\cite{Rudek:UE–12,Rudek:RE-13}
Unlike optical photons, x~rays typically have enough energy
to eject electrons from atoms until maybe the highest charge states
where simultaneous absorption of multiple
x~rays plays only a diminishing role.~\cite{Doumy:NA-11,Tamasaku:TP-14}

In many applications, the \xray~energy is far above the ionization
threshold of the involved atoms in all charge states
such as in structure determination of biomolecules~\cite{Neutze:BI-00}
and condensed matter.~\cite{Als-Nielsen:EM-01}
This is the case I focus on in this study.
It is also what was examined in the initial works on matter in FEL x~rays, \eg,
Refs.~\onlinecite{Hoener:FA-10,Young:FE-10,Cryan:AE-10,Fang:DC-10,Fang:MI-12}.
There the population of charge states
due to ionization by the sequential absorption of multiple x~rays
and ensuing electronic decay processes are investigated by
numerically solving rate equations.~\cite{Rohringer:XR-07,Son:HA-11,%
*Son:EH-11,Buth:UA-12,Son:MC-12,*Son:EM-15,Liu:RE-16,Buth:NX-17}
The solutions are then used to compute experimentally accessible
quantities such as ion yields.
Yet a more formal treatment of the underlying systems of differential
equations has not been pursued to date.

Here I would like to investigate the solution of such rate equations
analytically.
I formulate and prove the linearity theorem of multiphoton
absorption,~\footnote{%
The essence of the linearity theorem of multiphoton absorption
has already been stated in the third paragraph in the left column
on page~2 of Ref.~\onlinecite{Buth:UA-12}.}
a short-pulse approximation that uses this theorem with decay equations,
and I investigate the successive approximation of the solution of
the rate equations.
This allows me to classify multiphoton absorption into simultaneous,
linked, and separable processes.
Namely, specialized to a two-photon process, there is the well-known
simultaneous absorption of two photons~\cite{Goppert:EZ-31}
for x~rays.~\cite{Rohringer:XR-07,Doumy:NA-11}
This changes when the rate-equation approximation without simultaneous
two-photon terms is applicable.~\cite{LHuillier:MC-83,LHuillier:RG-83,%
LHuillier:LP-83,Crance:DS-85,Lambropoulos:ME-87,Delone:MP-00,%
Rohringer:XR-07,Makris:MM-09,Buth:UA-12,Buth:NX-17}
Then linked two-\xray-photon absorption can be discussed where decay
widths establish the link between two distinct one-\xray-photon absorptions.
In this article, I also turn to the third case of separable multiphoton
absorption hitherto not analyzed which is covered by the linearity theorem.
The reader should always bear in mind that linked and separable
multiphoton absorption are approximations to simultaneous absorption.
The core results of this article can be found in~\sref{sec:decayeqn}
where the analytical solution of the rate equations with decay are determined;
\sref{sec:ratewodecay} serves as an introduction to the subject neglecting
decay processes.

This article is structured as follows.
\Sref{sec:ratewodecay} treats rate equations without decay processes
which are introduced in~\sref{sec:rateonly}.
The linearity theorem of multiphoton absorption is proven in
\sref{sec:linmulph}.
Rate equations with decay processes are treated in~\sref{sec:decayeqn};
they are formulated in~\sref{sec:introdecay}.
In~\sref{sec:soldecay}, I examine rate equations with decay processes
in a short-pulse approximation.
\Sref{sec:succapprox} contains the derivation of a Volterra integral
equation from the rate equations with decay and its successive
approximation.
Results and discussions are in~\sref{sec:results} where the time evolution
of the charge states of a nitrogen is predicted using the approximations
derived before.
Conclusions are drawn in~\sref{sec:conclusion}.

Equations are formulated in atomic units.~\cite{Hartree:WM-28,Szabo:MQC-89}
All details of the calculations in this article are provided in the
Supplementary Data.~\cite{SuppData}\nocite{Mathematica:pgm-V11}

\section{Rate equations without decay processes}
\label{sec:ratewodecay}
\subsection{Closed system of rate equations}
\label{sec:rateonly}

I set out from rate equations which describe \xray~nonlinear optical processes
of atoms~\cite{Rohringer:XR-07,Buth:UA-12,Buth:NX-17} and
molecules.~\cite{Buth:UA-12,Liu:RE-16}
The discussion is restricted to the case that the \xray~energy
is sufficiently high to ionized all electrons of the atom
by one-photon absorption.~\footnote{%
For conceptual ease, I restrict myself here in considering the final
state to be electron-bare which, however, is not necessarily required.
Instead the single final state may also be a configuration which
still has electrons in deep lying shells which is energetically to
high in energy such that it cannot be ionized by the photons with
the given energy and is not amenable to REXMI.~\cite{Rudek:UE–12,Rudek:RE-13}}
The systems are assumed to be initially neutral with
$N \in \mathbb N$~electrons and in a nondegenerate ground state.
The electronic configurations of the system are enumerated by a double
index of the electron number~$n \in \{0, \ldots, N\}$ and the number
of configurations~$K^{(n)} \in \mathbb N$ in charge state~$n$.
The configurations are arranged with decreasing energy of configurations
in each charge state.
In total there are~$K = \Sum_{n = 0}^N K^{(n)}$ configurations.
There is only one neutral state and one electron-bare state, \ie,
$K^{(0)} = K^{(N)} = 1$.
For ease of notation, I let~$K^{(N+1)} = 0$.
With \emph{\xray~flux}~$J$ I denote a continuous real function in time
which is greater than zero only on a finite time interval and zero otherwise.
With $\sigma_{j \leftarrow i}^{(n)}, \sigma_i^{(n)} \geq 0$
for~$i \in \{1, \ldots, K^{(n)}\}$, $j \in \{1, \ldots, K^{(n-1)}\}$,
and $n \in \{0, \ldots, N\}$, I specify one-\xray-photon absorption \emph{cross
sections}, partial and total cross sections, respectively.~\footnote{%
A fixed photon energy is assumed which needs to be high enough such that
all electrons can be ionized;
the dependence of the quantities on the photon energy is omitted throughout.
If the \xray~photon energy is lower than the energy that is necessary to
strip the system of all electrons, then the total absorption cross
section of this last configuration---I assume that there is a single last
configuration---is zero, although there are still
electrons in lower lying shells.
Increasing the \xray~photon energy renders the total absorption
cross section nonzero.
Please see figure~2 in~\cite{Buth:UA-12} and its explanation.
Only cationic configurations are included
without regarding shake-off or shake-up processes and other
multielectron effects such as double Auger decay.}
Here~$\sigma_{j \leftarrow i}^{(n)}$~is the cross section to transition
from a configuration~$i$ with~$n$~electrons to a configuration~$j$
with~$n-1$~electrons by one-\xray-photon~absorption.
Typically not all configurations~$j$ can be reached
from configuration~$i$ and $\sigma_{j \leftarrow i}^{(n)}$ is set
to zero, if it is not possible.
For notational convenience, I let~$\sigma_{j \leftarrow 1}^{(0)}
= \sigma_{1 \leftarrow i}^{(N+1)} = 0$ for any~$i$, $j$.
The $\sigma_i^{(n)}$~is the cross section to transition from
configuration~$i$ with $n$~electrons to any accessible configuration
with $n-1$~electrons.
As no x~rays can be absorbed by the electron-bare nucleus, I
have for this configuration~$\sigma_1^{(0)} = 0$.

Rate equations can be constructed as follows or be derived from the
master equation in Lindblad form as a Pauli master equation which is discussed
in Refs.~\onlinecite{Pauli:FS-28,Louisell:QS-90,Buth:MC-17}.
Given probabilities~\cite{Krengel:WS-05}~$P_j^{(n)}(t)$ at
time~$t \in \mathbb R$ which are differentiable, to find the system
at time~$t \in \mathbb R$ with~$n$~electrons in configuration~$j$.
The initial condition is specified at the \emph{initial
time}~$\tau \in \mathbb R$ as~$P_j^{(n)}(\tau)$.
Here I assume further that the system is initially in the single neutral
charge state, \ie, $P_j^{(n)}(\tau) = \delta_{1\,j} \, \delta_{N\,n}$
with the Kronecker-$\delta$.~\cite{Kuptsov:KD-16}
Then the first rate equation, for the system in its neutral charge state reads
\begin{equation}
  \label{eq:reqground}
  \dfrac{\differential P_1^{(N)}(t)}{\differential t}
    = - \sigma_1^{(N)} \, P_1^{(N)}(t) \, J(t) \; ;
\end{equation}
it describes the ionization of the neutral system by the x~rays.
The rate equations of the singly-ionized system already have the full
form of rate equations without decay processes
\begin{equation}
  \label{eq:reqsingly}
  \dfrac{\differential P_j^{(N-1)}(t)}{\differential t}
    = \bigl[ \sigma_{j \leftarrow 1}^{(N)} \; P_1^{(N)}(t)
    - \sigma_j^{(N-1)} \, P_j^{(N-1)}(t) \bigr] \, J(t) \; ;
\end{equation}
for~$j \in \{1, \ldots, K^{(N-1)}\}$.
This continues analogously until the last rate equation,
for the electron-bare system, which is
\begin{equation}
  \label{eq:reqbare}
  \dfrac{\differential P_1^{(0)}(t)}{\differential t}
    = \Sum_{i = 1}^{K^{(1)}} \sigma_{1 \leftarrow i}^{(1)} \;
    P_i^{(1)}(t) \, J(t) \; .
\end{equation}
The above said can be compactly summarized as follows.
Inspecting the rate equations~\eref{eq:reqground}, \eref{eq:reqsingly},
\eref{eq:reqbare}, I observe that there are $K$~rate equations which
form the $K \times K$~system of linear first-order ordinary
differential equations
\begin{equation}
  \label{eq:req}
  \dfrac{\differential P_j^{(n)}(t)}{\differential t} = \Bigl[
    \Sum_{i = 1}^{K^{(n+1)}} \sigma_{j \leftarrow i}^{(n+1)} \; P_i^{(n+1)}(t)
    - \sigma_j^{(n)} \, P_j^{(n)}(t) \Bigr] \, J(t) \; .
\end{equation}
Then this constitutes a \emph{system of rate equations}.
If for such a system of rate equations
the \emph{total cross sections} are given by
\begin{equation}
  \label{eq:sigtot}
  \sigma_j^{(n)} = \Sum_{i = 1}^{K^{(n-1)}} \sigma_{i \leftarrow
    j}^{(n)} \; ,
\end{equation}
then it is termed a \emph{closed system of rate equations}.

The system of rate equations~\eref{eq:req} for all~$n$, $j$ can be
rewritten compactly by letting~$\vec p(t) \in \mathbb R^K$~stand
for the vector with the entries~$P_i^{(n)}(t)$, \ie,
\begin{equation}
  \label{eq:probvec}
  \vec p(t) = \bigr( P_1^{(N)}(t), P_1^{(N-1)}(t), \ldots,
                     P_{K^{(N-1)}}^{(N-1)}(t), \ldots, P_1^{(0)}(t) \bigr)
                     \transpose \; .
\end{equation}
The partial cross sections~$\sigma_{j \leftarrow i}^{(n)}$
form a strictly lower triangular matrix and $-\sigma_i^{(n)}$ are on
the matrix diagonal.
Together they are referred to as the \emph{cross section matrix}~$\mat \Sigma$.
With this notation, I express~\eref{eq:req} as
\begin{equation}
  \label{eq:vecreq}
  \dfrac{\differential \vec p(t)}{\differential t} = \mat \Sigma \; \vec p(t)
    \, J(t) \; .
\end{equation}
The structure of~$\mat \Sigma$ can be characterized by
\begin{definition}
\label{st:closure}
A real $K \times K$~matrix~$\mat A$ that has the \emph{closure property}:
\begin{equation}
  \label{eq:closure}
  - A_{ii} = \Sum_{\atopa{\scriptstyle j=1}{\scriptstyle j \neq i}}^K A_{ji}
    \; ,
\end{equation}
for all~$i \in \{1, \ldots, K\}$ is referred to as \emph{closed matrix}.
\end{definition}
The closure property~\eref{eq:sigtot} implies that $\mat \Sigma$
in~\eref{eq:vecreq} is a closed matrix.
\begin{remark}
If the matrix~$\mat A$ from Definition~\ref{st:closure} is also
lower-triangular, then it has at least one zero diagonal element
because~$A_{KK} = 0$~must hold in order to fulfill~\eref{eq:closure}.
\end{remark}
So far I referred to the~$p_i(t)$ as probabilities thereby assuming them
to lie in~$[0 ; 1]$ for all times~$t$.~\cite{Krengel:WS-05}
This fact, however, needs to be established rigorously.
\begin{remark}
\label{st:positivity}
The $\mat \Sigma$~is essentially nonnegative, \ie, $\Sigma_{ij} \geq 0$
for~$i \neq j$, and thus a Metzler matrix.~\cite{Giorgio:ME-15}
Therefore, Theorem~16.VII on page~183 in Ref.~\onlinecite{Walter:GD-00}
ensures the nonnegativity of the~$p_i(t)$.
\end{remark}
The word ``closed'' was chosen above to hint that
\begin{lemma}
\label{st:prob}
Probability in a closed system of rate equations~\eref{eq:req} is conserved.
\end{lemma}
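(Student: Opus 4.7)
The plan is to show that the total probability $S(t) = \sum_{n=0}^{N} \sum_{j=1}^{K^{(n)}} P_j^{(n)}(t) = \mathbf{1}^{\mathrm T} \vec p(t)$ is constant in time, where $\mathbf{1} \in \mathbb{R}^K$ denotes the all-ones vector. Combined with the initial condition $P_j^{(n)}(\tau) = \delta_{1\,j}\,\delta_{N\,n}$, which gives $S(\tau) = 1$, this yields $S(t) = 1$ for all $t$, i.e., conservation of probability.

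The central observation is that the closure property~\eref{eq:sigtot}, which in matrix form says $-\Sigma_{ii} = \sum_{j \neq i} \Sigma_{ji}$, is exactly the statement that each column of~$\mat\Sigma$ sums to zero, hence $\mathbf{1}^{\mathrm T} \mat\Sigma = \mathbf{0}^{\mathrm T}$. I would therefore differentiate $S(t)$ and invoke the compact form~\eref{eq:vecreq} of the rate equations to obtain
\begin{equation}
  \frac{\differential S(t)}{\differential t}
    = \mathbf{1}^{\mathrm T} \, \frac{\differential \vec p(t)}{\differential t}
    = \mathbf{1}^{\mathrm T} \mat\Sigma \, \vec p(t) \, J(t) = 0 \; ,
\end{equation}
so that $S(t) = S(\tau) = 1$ on all of $\mathbb{R}$.

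Alternatively, and perhaps more transparently from the indexed form~\eref{eq:req}, I would sum both sides over $n$ and $j$ and rename summation indices in the gain term. The double sum $\sum_{n,j} \sum_{i=1}^{K^{(n+1)}} \sigma_{j \leftarrow i}^{(n+1)} \, P_i^{(n+1)}(t)$, after shifting $n+1 \to n$ and swapping the roles of~$i$ and~$j$, becomes $\sum_{n,i} \bigl(\sum_{j=1}^{K^{(n-1)}} \sigma_{j \leftarrow i}^{(n)}\bigr) P_i^{(n)}(t) = \sum_{n,i} \sigma_i^{(n)} P_i^{(n)}(t)$ by~\eref{eq:sigtot}. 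This cancels the loss term $\sum_{n,j} \sigma_j^{(n)} P_j^{(n)}(t)$ exactly, yielding $\differential S/\differential t = 0$.

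There is no real obstacle here; the only subtlety worth checking is that the boundary cases at $n = 0$ and $n = N$ do not spoil the cancellation, which they do not thanks to the conventions $K^{(N+1)} = 0$ and $\sigma_{j\leftarrow 1}^{(0)} = \sigma_{1\leftarrow i}^{(N+1)} = 0$ as well as $\sigma_1^{(0)} = 0$ introduced earlier. Together with Remark~\ref{st:positivity}, which secures $P_j^{(n)}(t) \geq 0$, the equality $S(t) = 1$ then also justifies calling the~$P_j^{(n)}(t)$ probabilities lying in $[0,1]$.
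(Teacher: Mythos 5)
Your proof is correct and takes essentially the same route as the paper's: both observe that the closure property~\eref{eq:sigtot} makes the column sums of~$\mat \Sigma$ vanish and then sum the components of~\eref{eq:vecreq} to conclude $\tfrac{\differential}{\differential t} \Sum_{i=1}^K p_i(t) = 0$. Your index-level cancellation and the check of the boundary conventions are simply a more explicit rendering of the same one-line argument.
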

\begin{proof}
As~$\mat \Sigma$ is a closed matrix~\eref{eq:sigtot}, I observe that the
column sums of~$\mat \Sigma$ vanish, \ie, $\Sum_{i=1}^K \Sigma_{ij} = 0$ for
all~$j \in \{1, \ldots, K\}$.
Then it follows by summing the components of~\eref{eq:vecreq}:
\begin{equation}
  \label{eq:probability}
  \dfrac{\differential}{\differential t} \Sum_{i=1}^K p_i(t)
    = \Sum_{j=1}^K \Bigl[ \Sum_{i=1}^K \Sigma_{ij} \Bigr] \; p_j(t)
    \, J(t) = 0 \; ,
\end{equation}
that the probability is conserved for all times~$t$.
\end{proof}
Thus I make the
\begin{remark}
\label{st:probability}
With Remark~\ref{st:positivity} and Lemma~\ref{st:prob}
I can rightfully refer to the~$p_i(t)$ as probabilities~\cite{Krengel:WS-05}
provided that the initial values fulfill~$0 \leq p_i(\tau) \leq 1$
for~$i \in \{1, \ldots, K\}$ and $\Sum_{i = 1}^K p_i(\tau) = 1$ holds.
\end{remark}

\subsection{Linearity theorem of multiphoton absorption}
\label{sec:linmulph}

The rate equations~\eref{eq:vecreq} constitute a matrix initial-value
problem which satisfies the Lappo-Danilevskii condition~\cite{Komlenko:FM-11}:
\begin{equation}
  \label{eq:lappo}
  \mat \Sigma \, J(t) \Int_{\tau}^t \mat \Sigma \, J(t') \differential t'
    = \Int_{\tau}^t \mat \Sigma \, J(t') \differential t' \>
    \mat \Sigma \, J(t) \; .
\end{equation}
Hence the fundamental matrix can be written as an
exponential function~$\euler^{\Int_{\tau}^t \mats \Sigma \, J(t')
\differential t'}$~\cite{Komlenko:FM-11} which leads to the
solution of the rate equations~\eref{eq:vecreq}
on the last line of Eq.~\eref{eq:pertexp}.
But only some of the results in the following paragraphs are
found this way.

To gain deeper insight into the physics described by the
rate equations~\eref{eq:vecreq}, I obtain the solution
of~\eref{eq:vecreq} in an alternative way
using the eigenbasis of~$\mat \Sigma$.
For this I notice that, as~$\mat \Sigma$ is lower triangular, its
eigenvalues are the diagonal elements~$-\sigma_i^{(n)}$ which are
negative or zero such as
the last one~$-\sigma_1^{(0)} = 0$---because the electron-bare
nucleus cannot be ionized further---and pairwise distinct as the
total absorption cross sections~$\sigma_i^{(n)}$ refer to different
configurations~$i$ of charge state~$n$ where, on configuration level,
there are no degeneracies.
Following Lemma~\ref{st:diag}, $\mat \Sigma$~is diagonalizable where
the eigenvalues form the diagonal matrix~$\mat \Lambda
= \diag(\lambda_1, \ldots, \lambda_K)$ and the eigenvectors are
gathered in the matrix~$\mat U$.
As $\mat \Sigma$~is time independent, so are~$\mat \Lambda$ and $\mat U$.
Transforming~\eref{eq:vecreq} to the eigenbasis of~$\mat \Sigma$ yields
\begin{eqnarray}
  \label{eq:diagrates}
  \mat U^{-1} \, \dfrac{\differential \vec p(t)}{\differential t} &=&
    \dfrac{\differential \vec q(t)}{\differential t} = \mat U^{-1} \,
    \mat \Sigma \, \mat U \; \mat U^{-1} \, \vec p(t) \, J(t) \nonumber \\
  &=& \mat \Lambda \, \vec q(t) \, J(t) \; ,
\end{eqnarray}
which is a decoupled system of linear first-order ordinary differential
equations.
The $\ell$~th equation reads
\begin{equation}
  \dfrac{\differential q_{\ell}(t)}{\differential t} = \lambda_{\ell}
  \, q_{\ell}(t) \, J(t) \; ,
\end{equation}
which has the solution for time~$t \geq \tau$ with the integration starting at
time~$\tau$:
\begin{equation}
  \label{eq:decreq}
  q_{\ell}(t) = q_{\ell}(\tau) \; \euler^{\lambda_{\ell} \, \Phi(t)} \; ,
\end{equation}
where the initial value is~$q_{\ell}(\tau) \equiv (\mat U^{-1} \,
\vec p(\tau))_{\ell}$ and the \emph{\xray~fluence} is
\begin{equation}
  \label{eq:fluence}
  \Phi(t) = \Int_{\tau}^t J(t') \differential t' \; .
\end{equation}
As the system is initially in its ground state, I have
$\vec p(\tau) = (1,0,\ldots,0)\transpose \in \mathbb R^K$ and thus
$q_{\ell}(\tau) = (\mat U^{-1})_{\ell 1}$.
From the solution~$\vec q(t)$, the probabilities follow via~$\vec p(t) =
\mat U \, \vec q(t)$, hence
\begin{equation}
  \label{eq:preqfl}
  p_i(t) = \Sum_{\ell=1}^K U_{i\ell} \, q_{\ell}(t) = \Sum_{\ell=1}^K U_{i\ell}
    \, q_{\ell}(\tau) \; \euler^{-|\lambda_{\ell}| \, \Phi(t)} \; ,
\end{equation}
which is the unique solution of the system of rate
equations~\eref{eq:vecreq}.

I observe that $p_i(t)$~depends \emph{only} on the \xray~fluence~$\Phi(t)$
as it is the case for one-photon absorption.~\cite{Buth:UA-12}
The sign of~$\lambda_{\ell}$ in~\eref{eq:decreq} was made explicit
in~\eref{eq:preqfl};
it implies that the exponential functions in~\eref{eq:preqfl}
decay quickly to zero with increasing fluence meaning that the probability
of the system at time~$t$ to be in the configuration~$i$ \emph{with}
electrons decays to zero such that for~$\Phi(t) \to \infty$, eventually,
only the last configuration~$K$ of the system which is stripped bare of
electrons, \ie, $\lambda_K = 0$, has unit probability
[see~\eref{eq:probability}].
Conversely, the limit~$\Phi(t) \to 0$ lets all exponential functions
in~\eref{eq:preqfl} become unity and thus~$\vec p(t) = \vec p(\tau)$,
\ie, the system remains in the initial state for all times~$t$.

The solution~\eref{eq:preqfl} of the rate equations~\eref{eq:vecreq}
is recast, by expanding the exponential function
into a series using~$\vec q(\tau) = \mat U^{-1} \, \vec p(\tau)$, yielding
\begin{eqnarray}
  \label{eq:pertexp}
  p_i(t) &=& \Sum_{m=0}^{\infty} \dfrac{\Phi^m(t)}{m!}
    \Sum_{\ell,j=1}^K (\mat U^{\vphantom{-1}})_{i\ell} \, \lambda_{\ell}^m \,
    (\mat U^{-1})_{\ell j} \> p_j(\tau) \nonumber \\
  &=& \Sum_{j=1}^K p_j(\tau) \Sum_{m=0}^{\infty} \dfrac{\Phi^m(t)}{m!}
    \> \mat \Sigma^m_{ij} \\
  &=& \Sum_{j=1}^K \bigl( \euler^{\mats \Sigma \, \Phi(t)} \bigr)_{ij} \;
    p_j(\tau) \; . \nonumber
\end{eqnarray}
The middle expression in~\eref{eq:pertexp} exhibits the
perturbative nature of the problem.
Namely, the probability~$\vec p(t)$ is composed of the summands with
increasing powers of~$\mat \Sigma \> \Phi(t)$, \ie,
zero-photon absorption gives~$\unitmatrix$, one-photon absorption
adds~$\mat \Sigma \> \Phi(t)$, two-photon absorption
adds~$\tfrac{1}{2} \, \bigl( \mat \Sigma \> \Phi(t) \bigr)^2$, \ldots.
The above said proves
\begin{theorem}[Linearity theorem of multiphoton absorption]
\label{st:linearity}
Given a system of rate equations~\eref{eq:req},
then there is no nonlinearity in the interaction of x~rays with the
system and all multiphoton processes can be decomposed in terms
of subsequent (chained) one-\xray-photon absorptions~\eref{eq:pertexp}.
\end{theorem}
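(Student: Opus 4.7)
The plan is to exploit the separable structure of~\eref{eq:vecreq}, in which the time-dependent flux~$J(t)$ enters only as a scalar factor of the constant matrix~$\mat \Sigma$. My first step is to note that $\mat \Sigma\, J(t)$ and $\mat \Sigma\, J(t')$ commute trivially for any two times, so the Lappo-Danilevskii condition~\eref{eq:lappo} is automatically satisfied. Consequently the fundamental matrix of~\eref{eq:vecreq} is the ordinary matrix exponential $\euler^{\mats \Sigma\, \Phi(t)}$ with the fluence~$\Phi(t)$ of~\eref{eq:fluence} as the natural integrated variable.

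To turn this compact form into something physically interpretable, I would diagonalize~$\mat \Sigma$. Because $\mat \Sigma$ is lower triangular with pairwise distinct diagonal entries~$-\sigma_i^{(n)}$, its spectrum consists of simple eigenvalues, and Lemma~\ref{st:diag} supplies an invertible~$\mat U$ with $\mat U^{-1}\,\mat \Sigma\,\mat U = \mat \Lambda$. In the eigenbasis the system decouples into scalar equations whose separation of variables yields $q_\ell(t) = q_\ell(\tau)\,\euler^{\lambda_\ell\, \Phi(t)}$; transforming back by $\vec p(t) = \mat U\, \vec q(t)$ gives the closed-form solution~\eref{eq:preqfl} and, crucially, shows that $\vec p(t)$ depends on time \emph{only} through $\Phi(t)$, just as an ordinary one-photon absorption does.

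The remaining step is to make the multiphoton content of this exponential explicit. Taylor-expanding each $\euler^{\lambda_\ell\, \Phi(t)}$, interchanging the finite sums over eigenindices with the power series in~$m$, and collapsing $\mat U\,\mat \Lambda^m\,\mat U^{-1} = \mat \Sigma^m$ converts the closed form into $\Sum_{m=0}^{\infty}\frac{\Phi^m(t)}{m!}\,\mat \Sigma^m\,\vec p(\tau)$. The $m$-th term of this series is a polynomial of degree~$m$ in the single one-photon object $\mat \Sigma\, \Phi(t)$ and therefore encodes exactly $m$ chained one-photon absorptions: $\unitmatrix$ at order zero, $\mat \Sigma\, \Phi(t)$ at order one, $\tfrac{1}{2}\bigl(\mat \Sigma\, \Phi(t)\bigr)^2$ at order two, and so on. Since no other photon-field coupling ever appears, no genuine nonlinearity is present, which is precisely the theorem.

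I expect the only real technical obstacle to be the diagonalizability of~$\mat \Sigma$; I would discharge it by invoking the simplicity of its spectrum, which follows from the stated non-degeneracy of the total absorption cross sections across configurations, so that Lemma~\ref{st:diag} applies without further work. Everything else---commutativity under the scalar factor~$J(t)$, the appearance of the fluence, and the identification of $\mat \Sigma^m \Phi^m(t)/m!$ with $m$-fold chained one-photon absorption---reduces to routine manipulation of the matrix exponential.
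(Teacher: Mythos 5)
Your proposal is correct and follows essentially the same route as the paper: it invokes the Lappo-Danilevskii condition, diagonalizes~$\mat \Sigma$ via Lemma~\ref{st:diag}, solves the decoupled equations to obtain the fluence-only dependence~\eref{eq:preqfl}, and expands the exponential to recover the chained one-photon series~\eref{eq:pertexp}. No substantive differences to report.
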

The situation that is described by Theorem~\ref{st:linearity}
is termed \emph{separable multiphoton absorption}.
The dependence of~$\vec p(t)$ on~$t$ is nonlinear due to the
exponential functions in~\eref{eq:preqfl} that become constants
which depend only on the \xray~fluence after the \xray~pulse has
passed for~$t \to \infty$.
In the case of separable multiphoton processes, no nonlinearity
in the interaction is present.
The fact that the exponential series always converges
ought not to mislead the reader: this convergence does not imply that the
physical problem of multiphoton absorption actually is
amenable to a perturbative treatment which leads to the cross sections
that enter~\eref{eq:req}.
This fact needs to be established
independently.~\cite{Sorokin:PE-07,Makris:MM-09}

\section{Rate equations with decay processes}
\label{sec:decayeqn}
\subsection{Closed system of rate equations}
\label{sec:introdecay}

In many cases the formulation of the rate equations~\eref{eq:req}
from Sec.~\ref{sec:rateonly} represent an idealization because
there are decay processes occurring.
First, there is electronic decay in which an electron drops from
a higher shell to a lower shell releasing the excess energy by
ejecting another electron from a higher shell.
Second, there is radiative decay where an electron drops from
a higher shell to a lower shell emitting a photon with an
energy corresponding to the energy difference between the
two shells involved in the transition.
For electronic decay, the final state has one electron less
compared with the initial state whereas in radiative decay
the number of electrons is left unchanged.

With decay widths, the rate equations~\eref{eq:req} become
\begin{eqnarray}
  \label{eq:reqgam}
  \dfrac{\differential P_j^{(n)}(t)}{\differential t} &=& \Bigl[
    \Sum_{i = 1}^{K^{(n+1)}} \sigma_{j \leftarrow i}^{(n+1)} \;
    P_i^{(n+1)}(t) - \sigma_j^{(n)} \, P_j^{(n)}(t)
    \Bigr] \, J(t) \nonumber \\
  &&{} + \Sum_{i = 1}^{K^{(n+1)}} \gamma_{\mathrm{E}, j \leftarrow i}^{(n+1)} \;
    P_i^{(n+1)}(t) \\
  &&{} + \Sum_{i = 1}^{K^{(n)}} \gamma_{\mathrm{R}, j \leftarrow i}^{(n)} \;
    P_i^{(n)}(t) \nonumber \\
  &&{} - \gamma_j^{(n)} \, P_j^{(n)}(t) \; , \nonumber
\end{eqnarray}
where the partial decay widths for electronic decay
are~$\gamma_{\mathrm{E}, j \leftarrow i}^{(n+1)} \geq 0$ and for radiative
decay are~$\gamma_{\mathrm{R}, j \leftarrow i}^{(n)} \geq 0$.
Note that $\gamma_{\mathrm{R}, j \leftarrow i}^{(n)} = 0$ for~$j \geq i$.
The total decay width of a system in configuration~$j$ with~$n$~electrons
to make a transition to any in this way accessible other configuration
is~$\gamma_j^{(n)} = \gamma_{\mathrm{E}, j}^{(n)} + \gamma_{\mathrm{R},
j}^{(n)}$ in analogy to~\eref{eq:sigtot} given by
\begin{equation}
  \label{eq:gamtot}
    \gamma_{\mathrm{E}, j}^{(n)} = \Sum_{i = 1}^{K^{(n-1)}}
    \gamma_{\mathrm{E}, i \leftarrow j}^{(n)}
   \hbox{\quad and\quad} \gamma_{\mathrm{R}, j}^{(n)} = \Sum_{i = 1}^{K^{(n)}}
    \gamma_{\mathrm{R}, i \leftarrow j}^{(n)} \; .
\end{equation}
The decay widths~$\gamma_{\mathrm{R}, j \leftarrow i}^{(n)}$ and
$\gamma_j^{(n)}$ make their first appearance in the rate equations
for~$n = N - 1$, \ie, Eq.~\eref{eq:reqsingly}, and
the~$\gamma_{\mathrm{E}, j \leftarrow i}^{(n)}$
start to appear in the rate equations for~$n = N-2$.
There is no decay in the first rate equation, for the system in its
neutral charge state which has the same form as~\eref{eq:reqground}.
The last rate equation, for the electron-bare system, similar
to~\eref{eq:reqbare}, also does not contain a decay width.

Introducing the notation of~\eref{eq:probvec} and \eref{eq:vecreq},
I express~\eref{eq:reqgam} compactly as
\begin{equation}
  \label{eq:vecreqgam}
  \dfrac{\differential \vec{\mathfrak p}(t)}{\differential t} = \mat \Sigma
    \; \vec {\mathfrak p}(t) \, J(t) + \mat \Gamma
    \; \vec {\mathfrak p}(t) \; .
\end{equation}
I have, additionally, the matrix of decay widths~$\mat \Gamma$.
Because of the radiative decay widths, for this matrix to be lower triangular
with the negative of the total decay widths on the diagonal,
the electron configurations need to be sorted with decreasing energy.
Further the first and last rows and column of~$\mat \Gamma$ are
filled with zeros;
the inner part shall be referred to by~$\mat \Gamma^{\,\prime}$,
\ie, $\mat \Gamma$~can be written as the direct sum~$\mat \Gamma =
(0) \oplus \mat \Gamma^{\,\prime} \oplus (0)$.
Due to the closure property~\eref{eq:gamtot}, $\mat \Gamma$~is a closed
matrix and the probability~\cite{Krengel:WS-05} in the system of rate
equations~\eref{eq:reqgam} is conserved in analogy to
Remarks~\ref{st:positivity} and \ref{st:probability} and Lemma~\ref{st:prob}.

\subsection{Decay equation}
\label{sec:soldecay}

Inspecting~\eref{eq:vecreqgam}, I find that, with decay widths present,
one cannot proceed any longer as for the case without decay widths.
Namely, as~$\mat \Sigma$ and $\mat \Gamma$, in general, do not commute,
the Lappo-Danilevskii condition~\eref{eq:lappo} is not fulfilled
for~$\mat \Sigma \, J(t) + \mat \Gamma$.
For the same reason, the two matrices do not share a common
eigenbasis,~\cite{Fischer:LA-14,SuppData} \ie, the matrix on the
right-hand side of~\eref{eq:vecreqgam} cannot be diagonalized by
a time-independent matrix as in~\eref{eq:diagrates}.
Certainly, \Eref{eq:vecreqgam} can be integrated numerically
as in Ref.~\onlinecite{Buth:UA-12} or solved using the analytical solutions
of~\sref{sec:succapprox} but here I would like to determine
an approximate solution of it.
I obtain the solution~\eref{eq:preqfl} of the rate equations
neglecting the decay widths~\eref{eq:vecreq} first.
Thereby, I assume that the \xray~pulse is so short that, in the course
of the interaction, no noticeable decay of excited states occurs.
Second, I solve~\eref{eq:vecreqgam} without cross sections, \ie,
the \emph{decay equation}:
\begin{equation}
  \label{eq:decayeqn}
  \dfrac{\differential \vec{\mathfrak p}(t)}{\differential t} = \mat \Gamma
    \; \vec {\mathfrak p}(t) \; ,
\end{equation}
by going to the eigenbasis of~$\mat \Gamma$.~\cite{Walter:GD-00}
Namely, the decay widths for different electronic configurations of the system
are pairwise distinct or zero;
then, according to Lemma~\ref{st:diag}, the matrix~$\mat \Gamma$ is
diagonalizable.
Thus I have~$\mat \Gamma = \mat{\mathfrak U} \, \mat M \,
\mat{\mathfrak U}^{-1}$ with the matrix of eigenvectors~$\mat{\mathfrak U}$
and the matrix of eigenvalues~$\mat M = \diag(\mu_1, \ldots, \mu_K)$.
I arrive at the decoupled system of differential equations of the decay
\begin{equation}
  \mat{\mathfrak U}^{-1} \, \dfrac{\differential \vec{\mathfrak p}(t)}
    {\differential t} = \dfrac{\differential \vec{\mathfrak q}(t)}
    {\differential t} = \mat M \, \vec {\mathfrak q}(t) \; ,
\end{equation}
which has the solution with~$1 \leq \ell \leq K$, in analogy
to~\eref{eq:decreq}, given by
\begin{equation}
  \label{eq:decreqsol}
  \mathfrak q_{\ell}(t) = \mathfrak q_{\ell}(\mathfrak T) \; \euler^{\mu_{\ell}
    \, (t - \mathfrak T)} \; .
\end{equation}
Here decay is assumed to begin at time~$\mathfrak T$.
The starting probabilities at time~$T$ are found from the solution of the
system of rate equations without decay~\eref{eq:preqfl} and \eref{eq:pertexp}.
They are $\vec{\mathfrak p}(\mathfrak T) = \vec p(T)
= \mat{\mathfrak U} \, \vec{\mathfrak q}(\mathfrak T) \iff
\vec{\mathfrak q}(\mathfrak T) = \mat{\mathfrak U}^{-1} \,
\vec{\mathfrak p}(\mathfrak T)
= \mat{\mathfrak U}^{-1} \, \mat U \, \vec q(T)$
where $\vec q(T)$~is given by~\eref{eq:decreq}.
\textit{Nota bene}, that not necessarily~$\mathfrak T = T$ holds.
I introduce this flexibility, to allow the decay equations
to be integrated also from a~$\mathfrak T$ as the onset time of decay
processes.

The solution of the decay equations~\eref{eq:decreqsol} is transformed
back from the eigenbasis of~$\mat \Gamma$ to the original basis via
\begin{eqnarray}
  \mathfrak p_i(t) &=& \Sum_{\ell = 1}^K \mathfrak U_{i\ell} \;
    \euler^{\mu_{\ell} \, (t - \mathfrak T)} \> \mathfrak q_{\ell}
    (\mathfrak T) \\
   &=& \Sum_{n = 0}^{\infty} \Sum_{\ell, j = 1}^K \mathfrak U_{i\ell} \,
     \mu_{\ell}^n \, (\mat{\mathfrak U}^{-1})_{\ell j} \,
     \dfrac{(t - \mathfrak T)^n}{n!} \, \mathfrak p_j(\mathfrak T)
     \; . \nonumber
\end{eqnarray}
I arrive at the compact form of the solution into which the solution
of the rate equations without decay~\eref{eq:pertexp} is inserted proving
\begin{theorem}
\label{st:decayeq}
The approximate solution of rate equations with decay
processes~\eref{eq:vecreqgam} in terms of a solution of the rate
equations without decay~\eref{eq:vecreq} in the time
interval~$[\tau ; T]$ and subsequently the decay equation~\eref{eq:decayeqn}
reads
\begin{equation}
  \label{eq:decaysolution}
  \vec{\mathfrak p}(t) = \euler^{\mats \Gamma \, (t - \mathfrak T)} \>
    \vec{\mathfrak p}(\mathfrak T)
  = \euler^{\mats \Gamma \, (t - \mathfrak T)} \; \euler^{\mats \Sigma
    \, \Phi(T)} \> \vec{p}(\tau) \; ,
\end{equation}
for times~$t \geq \mathfrak T$.
Photoionization lasts till time~$T$ and decay processes start at
time~$\mathfrak T$.
\end{theorem}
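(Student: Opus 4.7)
The plan is to prove this by a straightforward two-stage composition: first apply Theorem~\ref{st:linearity} to the pulse interval $[\tau;T]$ to obtain $\vec p(T) = \euler^{\mats\Sigma\,\Phi(T)}\,\vec p(\tau)$, then take this value as the initial condition $\vec{\mathfrak p}(\mathfrak T) = \vec p(T)$ for the decay equation~\eref{eq:decayeqn} on $t\ge\mathfrak T$, and finally verify that the closed-form solution of~\eref{eq:decayeqn} is $\euler^{\mats\Gamma\,(t-\mathfrak T)}\,\vec{\mathfrak p}(\mathfrak T)$. Composing the two propagators yields precisely~\eref{eq:decaysolution}. Nothing here requires a commutator identity because the two stages are applied on disjoint time intervals; the ``approximate'' nature comes solely from the physical assumption that decay during $[\tau;T]$ is negligible (so that $\mat\Gamma$ can be dropped on the pulse interval) and that the flux $J(t)$ vanishes on $[\mathfrak T;\infty)$ (so that $\mat\Sigma$ can be dropped during decay).

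For the first stage, I would simply cite~\eref{eq:pertexp} evaluated at $t=T$, which gives the matrix-exponential form of $\vec p(T)$ directly in terms of $\vec p(\tau)$ and $\Phi(T)$. For the second stage, I would diagonalize $\mat\Gamma$ by invoking the referenced Lemma~\ref{st:diag}: since the total decay widths $\gamma_j^{(n)}$ on the diagonal of the lower-triangular matrix $\mat\Gamma$ are pairwise distinct or zero, there exist $\mat{\mathfrak U}$ and $\mat M=\diag(\mu_1,\dots,\mu_K)$ with $\mat\Gamma = \mat{\mathfrak U}\,\mat M\,\mat{\mathfrak U}^{-1}$. Transforming $\vec{\mathfrak q}(t)=\mat{\mathfrak U}^{-1}\vec{\mathfrak p}(t)$ decouples~\eref{eq:decayeqn} into $\differential \mathfrak q_\ell/\differential t = \mu_\ell\,\mathfrak q_\ell$, whose unique solution with the given initial value is~\eref{eq:decreqsol}.

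Transforming back component-wise and repackaging the sum as a matrix exponential gives
\begin{equation}
  \vec{\mathfrak p}(t) = \mat{\mathfrak U}\,\euler^{\mats M\,(t-\mathfrak T)}\,\mat{\mathfrak U}^{-1}\,\vec{\mathfrak p}(\mathfrak T) = \euler^{\mats\Gamma\,(t-\mathfrak T)}\,\vec{\mathfrak p}(\mathfrak T) \; ,
\end{equation}
where the last equality uses the standard identity $\mat{\mathfrak U}\,\euler^{\mats M\,s}\,\mat{\mathfrak U}^{-1}=\euler^{\mats\Gamma\,s}$ applied to a diagonalizable matrix, or equivalently the termwise series expansion as written in the text immediately preceding the theorem statement. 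Finally, inserting $\vec{\mathfrak p}(\mathfrak T)=\vec p(T)=\euler^{\mats\Sigma\,\Phi(T)}\,\vec p(\tau)$ yields the second equality in~\eref{eq:decaysolution}.

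The only real obstacle is bookkeeping rather than analysis: one must verify the hypotheses of Lemma~\ref{st:diag} for $\mat\Gamma$ (done by the pairwise-distinct argument on the diagonal, together with the zeros that occur for the neutral and electron-bare rows/columns in the direct sum $\mat\Gamma=(0)\oplus\mat\Gamma^{\,\prime}\oplus(0)$), and one must be clear about the domain of validity: the formula gives the \emph{exact} propagation of the decay equation from $\mathfrak T$ onward, but represents an \emph{approximation} to~\eref{eq:vecreqgam} whenever decay during photoionization is non-negligible or the pulse tail extends past~$\mathfrak T$. Since this approximate character is explicitly part of the theorem's wording, no further estimate of the error is required here; it is the successive approximation developed in~\sref{sec:succapprox} that will handle this quantitatively.
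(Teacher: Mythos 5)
Your proposal is correct and follows essentially the same route as the paper: the paper likewise takes the no-decay solution~\eref{eq:pertexp} evaluated at~$T$ as the initial condition~$\vec{\mathfrak p}(\mathfrak T) = \vec p(T)$, diagonalizes~$\mat \Gamma$ via Lemma~\ref{st:diag} (pairwise distinct or zero decay widths), solves the decoupled decay equations as in~\eref{eq:decreqsol}, and transforms back to obtain the matrix exponential~$\euler^{\mats \Gamma \, (t - \mathfrak T)}$, which it writes as the termwise series you invoke as the standard identity for diagonalizable matrices. Your remarks on the distinction between~$T$ and~$\mathfrak T$ and on the source of the approximation match the paper's own discussion.
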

Theorem~\ref{st:decayeq} does not reduce to Theorem~\ref{st:linearity} upon
letting~$\mat \Gamma = \mat 0$ as the photoionization of the system
is assumed to be completed at time~$T$, \ie, the temporal progression of
photoionization does not enter~\eref{eq:decaysolution}.
This apparent weakness of Theorem~\ref{st:decayeq}, however,
can be used judiciously by replacing the solution of the nondecaying
system~$\euler^{\mats \Sigma \, \Phi(T)} \> \vec{p}(\tau)$ by a
(numerical) solution of~\eref{eq:vecreqgam} in the time
interval~$[\tau ; T]$.
In this case, the solution of Eq.~\eref{eq:decayeqn},
$\euler^{\mats \Gamma \, (t - \mathfrak T)} \> \vec{p}(T)$
for~$\mathfrak T = T$ and $t \geq T$, is the (numerically\nbh)exact solution
of~\eref{eq:vecreqgam}.
This procedure has been used to treat slow radiative decay to
calculate photon yields in Ref.~\onlinecite{Buth:NX-17}.

\subsection{Successive approximation}
\label{sec:succapprox}

A simple analytic solution of the interacting and decaying
system~\eref{eq:vecreqgam} in terms of the Peano-Baker series is well
known~\cite{Komlenko:FM-11,Baake:PB-11};
it is stated in
\begin{theorem}
\label{st:straight}
Given a system of rate equations with decay~\eref{eq:vecreqgam},
then its solution is
\begin{equation}
  \label{eq:succappsimple}
  \vec {\mathfrak p}(t) = \mat X(t) \, \vec {\mathfrak p}(\tau) \; ,
\end{equation}
for times~$t \geq \tau$ with the fundamental matrix~\cite{Komlenko:FM-11}:
\begin{eqnarray}
  \label{eq:timeevolsimple}
  &&{}\mat X(t) = \unitmatrix
    + \Int_{\tau}^t (\mat \Sigma \, J(t') + \mat \Gamma) \differential t' \\
  &&{} + \Int_{\tau}^t (\mat \Sigma \, J(t') + \mat \Gamma) \Int_{\tau}^{t'}
    (\mat \Sigma \, J(t'') + \mat \Gamma) \differential t'' \differential t'
    \nonumber \\
  &&{} + \ldots \; . \nonumber
\end{eqnarray}
\end{theorem}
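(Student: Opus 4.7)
The plan is to verify Theorem~\ref{st:straight} by Picard iteration: recast \eref{eq:vecreqgam} as a Volterra integral equation of the second kind, then iterate and identify the resulting series with the Peano--Baker series~\eref{eq:timeevolsimple}.

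First, I would integrate \eref{eq:vecreqgam} from $\tau$ to $t$, using the initial condition $\vec{\mathfrak p}(\tau)$, to obtain the equivalent integral equation
\begin{equation}
  \vec{\mathfrak p}(t) = \vec{\mathfrak p}(\tau)
    + \Int_\tau^t \bigl( \mat \Sigma \, J(t') + \mat \Gamma \bigr)
      \, \vec{\mathfrak p}(t') \differential t' \; .
\end{equation}
Since $J$ is continuous and $\mat \Sigma$, $\mat \Gamma$ are constant, the kernel $\mat K(t') \equiv \mat \Sigma \, J(t') + \mat \Gamma$ is continuous in $t'$, so the hypotheses of the standard Picard--Lindel\"of theorem are satisfied and the solution, if it exists and is unique, coincides with this integral form.

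Second, I would define the successive approximations $\vec{\mathfrak p}^{(0)}(t) = \vec{\mathfrak p}(\tau)$ and
\begin{equation}
  \vec{\mathfrak p}^{(m+1)}(t) = \vec{\mathfrak p}(\tau)
    + \Int_\tau^t \mat K(t') \, \vec{\mathfrak p}^{(m)}(t') \differential t'
  \; .
\end{equation}
By induction one sees that $\vec{\mathfrak p}^{(m)}(t) = \mat X^{(m)}(t) \, \vec{\mathfrak p}(\tau)$, where $\mat X^{(m)}(t)$ is the $m$-fold nested integral that forms the $m$-th partial sum of the series~\eref{eq:timeevolsimple}. Taking $m \to \infty$ formally produces $\mat X(t) \, \vec{\mathfrak p}(\tau)$ with $\mat X(t)$ as claimed.

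Third, I would establish uniform convergence on any compact interval $[\tau, T]$. Let $C \equiv \sup_{t' \in [\tau,T]} \| \mat K(t') \|$, which is finite because $J$ has compact support and is continuous; then a straightforward induction shows that the norm of the $m$-th term of the series is bounded by $C^m (t-\tau)^m / m!$, whose sum is $\euler^{C(t-\tau)}$. Hence the Peano--Baker series converges absolutely and uniformly on $[\tau, T]$, term-by-term differentiation is justified, and inserting the result into \eref{eq:vecreqgam} confirms that it solves the ODE with the correct initial data. Uniqueness follows from the Gr\"onwall inequality applied to the difference of two putative solutions, closing the argument.

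The only mildly delicate step is the uniform-convergence bound, but since $J$ is assumed continuous with bounded support and $\mat \Sigma$, $\mat \Gamma$ are fixed matrices, the majorant $\euler^{C(t-\tau)}$ makes this routine; everything else is a direct unfolding of Picard iteration and the series~\eref{eq:timeevolsimple} is, by construction, exactly what one obtains.
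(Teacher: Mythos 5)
Your proposal is correct and follows essentially the same route as the paper: recast Eq.~\eref{eq:vecreqgam} as a linear Volterra integral equation of the second kind with continuous kernel $\mat \Sigma \, J(t') + \mat \Gamma$ and solve it by successive (Picard) approximation, which unfolds into the Peano--Baker series~\eref{eq:timeevolsimple}. The only differences are cosmetic---the paper iterates the equation for the fundamental matrix $\mat X(t)$ rather than for $\vec{\mathfrak p}(t)$ itself, and it delegates the convergence and uniqueness details to the cited references on Volterra equations, whereas you supply the exponential majorant and the Gr\"onwall argument explicitly.
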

\begin{proof}
The fundamental matrix~$\mat X(t)$ solves the initial-value
problem~$\tfrac{\differential \mats X(t)}{\differential t}
= (\mat \Sigma \, J(t) + \mat \Gamma) \, \mat X(t)$,
$\mat X(\tau) = \unitmatrix$.~\cite{Komlenko:FM-11}
This equation can be integrated leading to~$\mat X(t)
= \unitmatrix + \Int_{\tau}^t (\mat \Sigma \, J(t') + \mat \Gamma) \,
\mat X(t') \differential t'$;
it is a linear Volterra integral equation of the second
kind.~\cite{Burton:Vi-83,Linz:AN-85,Walter:GD-00,Bakushinskii:VE-11}
Its kernel~$\mat \Sigma \, J(t') + \mat \Gamma$ is continuous because
$J$~is assumed to be continuous.
Therefore, it can be solved uniquely by successive
approximation~\cite{Burton:Vi-83,Linz:AN-85,Walter:GD-00,Khvedelidze:SA-11}
yielding a continuous solution, \ie, by inserting the equation on the
right-hand side for~$\mat X(t')$.
This leads to~\eref{eq:timeevolsimple} and the solution of~\eref{eq:vecreqgam}
is~\eref{eq:succappsimple}.
\end{proof}
Clearly, the analytical solution~\eref{eq:succappsimple} of the coupled system
of differential equations~\eref{eq:vecreqgam}, despite its exactness,
is of limited value.
This is because the exponential functions that emerge in
Theorems~\ref{st:linearity} and \ref{st:decayeq} are, thereby,
expanded into a series.
Thus the solution~\eref{eq:succappsimple} can be expected to converge
slowly with the order taken into account in~\eref{eq:timeevolsimple}.
Yet, despite its limited use, Theorem~\ref{st:straight} reveals
the emergence of nonlinearity due to decay in the sequential
absorption of multiple x~rays.
This case is referred to as \emph{linked multiphoton absorption}.

To incorporate more of the physics of the problem into the analytical expression
for the solution, I change to the eigenbasis of~$\mat \Sigma$ as
for~\eref{eq:diagrates} which produces
\begin{equation}
  \label{eq:reqcoupled}
  \dfrac{\differential \vec{\mathfrak q}(t)}{\differential t} = \mat \Lambda
    \; \vec {\mathfrak q}(t) \, J(t) + \mat \Delta
    \; \vec {\mathfrak q}(t) \; ,
\end{equation}
with $\mat \Delta = \mat U^{-1} \, \mat \Gamma \, \mat U$.
Next I express~\eref{eq:reqcoupled} componentwise by
\begin{equation}
  \dfrac{\differential \mathfrak q_i(t)}{\differential t}
    = \lambda_i \; \mathfrak q_i(t) \, J(t) + \Sum_{j=1}^K
    \Delta_{ij} \; \mathfrak q_j(t) \; ,
\end{equation}
for~$i \in \{1, \ldots, K\}$.
Letting~$\mathfrak q_i(t) = \euler^{\lambda_i \, \Phi(t)} \, \mathfrak r_i(t)$,
I have
\begin{equation}
  \dfrac{\differential \mathfrak r_i(t)}{\differential t}
    = \Sum_{j=1}^K \euler^{-\lambda_i \, \Phi(t)} \, \Delta_{ij} \,
    \euler^{\lambda_j \, \Phi(t)} \; \mathfrak r_j(t)
    = \Sum_{j=1}^K \Xi_{ij} \; \mathfrak r_j(t) \; ,
\end{equation}
or in vector notation
\begin{equation}
  \label{eq:intEOM}
  \dfrac{\differential \vec{\mathfrak r}(t)}{\differential t} = \mat \Xi(t) \>
    \vec{\mathfrak r}(t) \; .
\end{equation}
The time-dependent matrix~$\mat \Xi(t)$ can be expressed succinctly by
\begin{equation}
  \label{eq:timepert}
  \mat \Xi(t) = \euler^{-\mats \Lambda \, \Phi(t)} \> \mat \Delta \>
    \euler^{\mats \Lambda \, \Phi(t)}
    = \mat U^{-1} \> \euler^{-\mats \Sigma \, \Phi(t)} \> \mat \Gamma \>
    \euler^{\mats \Sigma \, \Phi(t)} \> \mat U \; .
\end{equation}
\Eref{eq:intEOM} can be integrated on both sides leading to
\begin{equation}
  \label{eq:intEOMintegral}
  \vec{\mathfrak r}(t) = \vec{\mathfrak r}(\tau) + \Int_{\tau}^t
    \mat \Xi(t') \> \vec{\mathfrak r}(t') \differential t' \; .
\end{equation}
This is a linear Volterra integral equation of the second
kind.~\cite{Burton:Vi-83,Linz:AN-85,Walter:GD-00,Bakushinskii:VE-11}
Its kernel~$\mat \Xi(t')$ is continuous because
the \xray~fluence~\eref{eq:fluence} is a continuous function with time.
Then the unique solution of Eq.~\eref{eq:intEOMintegral} is continuous as well.
Using Eq.~\eref{eq:timepert}, I rewrite Eq.~\eref{eq:intEOMintegral}
in the original basis
\begin{equation}
  \label{eq:intEOMintegralOrig}
  \vec{\mathfrak p}(t) = \euler^{\mats \Sigma \, \Phi(t)} \>
    \vec{\mathfrak p}(\tau) + \euler^{\mats \Sigma \, \Phi(t)} \Int_{\tau}^t
    \euler^{-\mats \Sigma \, \Phi(t')} \> \mat \Gamma \>
    \vec{\mathfrak p}(t') \differential t' \; ,
\end{equation}
where I realize that
\begin{equation}
  \vec {\mathfrak q}(t) = \euler^{\mats \Lambda \, \Phi(t)} \>
    \vec{\mathfrak r}(t) = \mat U^{-1} \, \vec {\mathfrak p}(t) \; ,
\end{equation}
holds with the initial condition~$\vec {\mathfrak q}(\tau)
= \vec{\mathfrak r}(\tau) = \mat U^{-1} \, \vec {\mathfrak p}(\tau)$.

Volterra equations such as Eq.~\eref{eq:intEOMintegral} with a continuous
kernel can be solved uniquely by successive approximation~\cite{Burton:Vi-83,%
Linz:AN-85,Walter:GD-00,Khvedelidze:SA-11}---\ie,
inserting~\eref{eq:intEOMintegral} for~$\vec{\mathfrak r}(t')$
in~\eref{eq:intEOMintegral}---reading
\begin{equation}
  \label{eq:intEOMtimeev}
  \vec{\mathfrak r}(t) = \mat{\Cal U}(t,\tau) \> \vec{\mathfrak r}(\tau)
    \; ,
\end{equation}
introducing the time-evolution matrix via
\begin{eqnarray}
  \label{eq:timeevmat}
  \mat{\Cal U}(t,\tau) &=& \unitmatrix + \Int_{\tau}^t \mat \Xi(t') \>
    \differential t' \\
  &&{} + \Int_{\tau}^t \mat \Xi(t') \> \Int_{\tau}^{t'} \mat \Xi(t'') \>
    \differential t'' \differential t' + \ldots \; , \nonumber
\end{eqnarray}
which transforms the initial condition~$\vec{\mathfrak r}(\tau)$
at time~$\tau$ into the solution~$\vec{\mathfrak r}(t)$ at time~$t \geq \tau$.
This is the matrix that needs to be evaluated in order to
solve~\eref{eq:intEOMtimeev}.
Expressing~\eref{eq:intEOMtimeev} and \eref{eq:timeevmat} in
the original basis, I prove that the solution of the Volterra
integral equation~\eref{eq:intEOMintegralOrig} can be written
concisely as stated in
\begin{theorem}
\label{st:volterra}
Given a system of rate equations with decay~\eref{eq:vecreqgam},
then its solution is
\begin{equation}
  \label{eq:succapp}
  \vec {\mathfrak p}(t) = \euler^{\mats \Sigma \, \Phi(t)} \>
    \mat U \, \mat{\Cal U}(t,\tau) \, \mat U^{-1} \,
    \vec {\mathfrak p}(\tau) \; ,
\end{equation}
for times~$t \geq \tau$ with the time-evolution matrix~\eref{eq:timeevmat} which
can be expressed via~\eref{eq:timepert} as
\begin{eqnarray}
  \label{eq:timeevol}
  &&{}\mat U \, \mat{\Cal U}(t,\tau) \, \mat U^{-1} = \unitmatrix
    + \Int_{\tau}^t \euler^{-\mats \Sigma \, \Phi(t')} \> \mat \Gamma \>
    \euler^{\mats \Sigma \, \Phi(t')} \differential t' \\
  &&{} + \Int_{\tau}^t \euler^{-\mats \Sigma \, \Phi(t')} \> \mat \Gamma \>
    \euler^{\mats \Sigma \, \Phi(t')} \Int_{\tau}^{t'}
    \euler^{-\mats \Sigma \, \Phi(t'')} \> \mat \Gamma \>
    \euler^{\mats \Sigma \, \Phi(t'')}
    \differential t'' \differential t' \nonumber \\
  &&{} + \ldots \; . \nonumber
\end{eqnarray}
\end{theorem}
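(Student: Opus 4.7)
The plan is to knit together into a single argument the chain of substitutions already developed in the paragraphs leading up to the theorem. First I would diagonalize $\mat \Sigma$ via the time-independent similarity transformation~$\mat U$ and pass to the transformed vector~$\vec{\mathfrak q}(t) = \mat U^{-1} \, \vec{\mathfrak p}(t)$, converting~\eref{eq:vecreqgam} into~\eref{eq:reqcoupled} with~$\mat \Delta = \mat U^{-1} \, \mat \Gamma \, \mat U$. The point of this step is that the diagonal piece~$\mat \Lambda \, J(t)$ trivially satisfies the Lappo-Danilevskii condition~\eref{eq:lappo} with itself, so it can be removed by the integrating-factor ansatz~$\mathfrak q_i(t) = \euler^{\lambda_i \, \Phi(t)} \, \mathfrak r_i(t)$. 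A direct differentiation using~$\dfrac{\differential \Phi(t)}{\differential t} = J(t)$ cancels the~$\mat \Lambda$-term and yields~\eref{eq:intEOM} with the time-dependent kernel~$\mat \Xi(t)$ of~\eref{eq:timepert}, which couples the components only through the conjugated decay matrix.

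The second step is to integrate~\eref{eq:intEOM} from~$\tau$ to~$t$, incorporating the initial condition~$\vec{\mathfrak r}(\tau) = \mat U^{-1} \, \vec{\mathfrak p}(\tau)$, obtaining the linear Volterra integral equation of the second kind~\eref{eq:intEOMintegral}. Since~$J$ is continuous by assumption, so is the fluence~$\Phi$ and hence the kernel~$\mat \Xi$. The classical Picard iteration for Volterra equations with a continuous kernel~\cite{Burton:Vi-83,Linz:AN-85,Walter:GD-00,Khvedelidze:SA-11}---the same tool already employed in the proof of Theorem~\ref{st:straight}---then delivers the unique continuous solution as the Peano-Baker series~\eref{eq:intEOMtimeev}, \eref{eq:timeevmat}. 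Uniform convergence on any compact interval follows from majorizing the~$n$\nbh{}th iterated integral by~$(M(t-\tau))^n/n!$, where~$M$ bounds the norm of~$\mat \Xi$ on~$[\tau;t]$.

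Finally I would transform back to the original basis via~$\vec{\mathfrak p}(t) = \mat U \, \euler^{\mats \Lambda \, \Phi(t)} \, \vec{\mathfrak r}(t)$, use the identity $\mat U \, \euler^{\mats \Lambda \, \Phi(t)} \, \mat U^{-1} = \euler^{\mats \Sigma \, \Phi(t)}$, and push the remaining factors $\mat U$, $\mat U^{-1}$ through the iterated integrals term by term by means of the conjugation identity~$\mat U \, \mat \Xi(t) \, \mat U^{-1} = \euler^{-\mats \Sigma \, \Phi(t)} \, \mat \Gamma \, \euler^{\mats \Sigma \, \Phi(t)}$. This reassembly produces precisely the compact form~\eref{eq:succapp} with the explicit expansion~\eref{eq:timeevol}. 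The main concern is not a genuine obstacle but careful bookkeeping: I must verify that the internal $\mat U \, \mat U^{-1}$ pairs appearing between successive factors of~$\mat \Xi$ in~\eref{eq:timeevmat} all telescope, leaving only the outer similarity transformation surviving in~\eref{eq:succapp}. This is immediate because~$\mat U$ is time independent, so it commutes with every integration sign in~\eref{eq:timeevmat} and may be freely pulled out past the nested integrals.
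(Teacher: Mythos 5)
Your proposal reproduces the paper's own argument step for step: transformation to the eigenbasis of~$\mat \Sigma$ giving~\eref{eq:reqcoupled}, the integrating-factor substitution~$\mathfrak q_i(t) = \euler^{\lambda_i \, \Phi(t)} \, \mathfrak r_i(t)$ leading to the Volterra equation~\eref{eq:intEOMintegral} with continuous kernel~$\mat \Xi(t)$, successive approximation yielding~\eref{eq:timeevmat}, and back-transformation via the conjugation identity to obtain~\eref{eq:succapp} and \eref{eq:timeevol}. The added remarks on uniform convergence of the iterates and on the telescoping of the $\mat U \, \mat U^{-1}$ pairs are correct and merely make explicit what the paper delegates to the cited references.
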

The final result~\eref{eq:timeevol} clearly exhibits the role of
decay processes for turning the interaction with x~rays nonlinear
in contrast to the case without decay in Theorem~\ref{st:linearity}.
Theorem~\ref{st:volterra} becomes Theorem~\ref{st:linearity} upon
letting~$\mat \Gamma = \mat 0$.

The line of arguments that constitutes the proof of Theorem~\ref{st:volterra}
can be applied also to the complementary case that one
transforms~\eref{eq:vecreqgam} to the eigenbasis of~$\mat \Gamma$
[see Eq.~\eref{eq:decayeqn} and the ensuing discussion] instead of
using Eq.~\eref{eq:reqcoupled}.
Then a Volterra equation analogous to Eq.~\eref{eq:intEOMintegralOrig}
is derived reading
\begin{equation}
  \label{eq:intEOMintegralOrigdec}
  \vec{\mathfrak p}(t) = \euler^{\mats \Gamma \, (t - \tau)} \>
    \vec{\mathfrak p}(\tau) + \euler^{\mats \Gamma \, (t - \tau)} \Int_{\tau}^t
    \euler^{-\mats \Gamma \, (t' - \tau)} \> \mat \Sigma \, J(t') \>
    \vec{\mathfrak p}(t') \differential t' \; ,
\end{equation}
which is solved by successive
approximation~\cite{Burton:Vi-83,Linz:AN-85,Walter:GD-00,Khvedelidze:SA-11}
proving
\begin{theorem}
\label{st:volterradec}
Given a system of rate equations with decay~\eref{eq:vecreqgam},
then its solution is
\begin{equation}
  \vec {\mathfrak p}(t) = \euler^{\mats \Gamma \, (t - \tau)} \>
    \mat{\mathfrak U} \, \mat{\Cal U}'(t,\tau) \, \mat{\mathfrak U}^{-1} \,
    \vec {\mathfrak p}(\tau) \; ,
\end{equation}
for times~$t \geq \tau$ with the time-evolution matrix~$\mat{\Cal U}'(t,\tau)$
[analogous to Eq.~\eref{eq:timeevmat}] which can be expressed as
\begin{eqnarray}
  \label{eq:timeevoldec}
  \mat{\mathfrak U} \, \mat{\Cal U}'(t,\tau)&& \, \mat{\mathfrak U}^{-1}
    = \unitmatrix  + \Int_{\tau}^t \euler^{-\mats \Gamma \, (t' - \tau)}
    \> \mat \Sigma \, J(t') \> \euler^{\mats \Gamma \, (t' - \tau)}
    \differential t' \nonumber \\
  &&{} + \Int_{\tau}^t \euler^{-\mats \Gamma \, (t' - \tau)}
    \> \mat \Sigma \, J(t') \> \euler^{\mats \Gamma \, (t' - \tau)} \nonumber \\
  &&{} \times \Int_{\tau}^{t'} \euler^{-\mats \Gamma \, (t'' - \tau)}
    \> \mat \Sigma \, J(t'') \> \euler^{\mats \Gamma \, (t'' - \tau)}
    \differential t'' \differential t' \nonumber \\
  &&{} + \ldots \; .
\end{eqnarray}
\end{theorem}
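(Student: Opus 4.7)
The plan is to mirror the proof of Theorem~\ref{st:volterra}, interchanging the roles of~$\mat \Sigma \, J(t)$ and $\mat \Gamma$: instead of first diagonalizing~$\mat \Sigma$ and factoring out an $\euler^{\mats \Sigma \, \Phi(t)}$, I diagonalize~$\mat \Gamma$ and factor out an $\euler^{\mats \Gamma \, (t - \tau)}$. Concretely, I transform~\eref{eq:vecreqgam} to the eigenbasis of~$\mat \Gamma$ using~$\mat \Gamma = \mat{\mathfrak U} \, \mat M \, \mat{\mathfrak U}^{-1}$ (diagonalizability was already established in~\sref{sec:soldecay}), obtaining the analog of~\eref{eq:reqcoupled},

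\begin{equation*}
  \dfrac{\differential \vec{\mathfrak q}(t)}{\differential t} = \mat M \,
    \vec{\mathfrak q}(t) + \mat{\mathfrak U}^{-1} \, \mat \Sigma \,
    \mat{\mathfrak U} \; \vec{\mathfrak q}(t) \, J(t) \; ,
\end{equation*}

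with $\vec{\mathfrak q}(t) = \mat{\mathfrak U}^{-1} \, \vec{\mathfrak p}(t)$. Since $\mat M$~is diagonal and time independent, the substitution~$\mathfrak q_i(t) = \euler^{\mu_i \, (t - \tau)} \, \mathfrak r_i(t)$ eliminates the diagonal term and leaves an equation of the form~$\tfrac{\differential \vec{\mathfrak r}(t)}{\differential t} = \mat \Xi'(t) \, \vec{\mathfrak r}(t)$ with the continuous kernel~$\mat \Xi'(t) = \euler^{-\mats M \, (t - \tau)} \, \mat{\mathfrak U}^{-1} \, \mat \Sigma \, \mat{\mathfrak U} \, \euler^{\mats M \, (t - \tau)} \, J(t)$.

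Next I integrate from~$\tau$ to~$t$ to turn this into a linear Volterra integral equation of the second kind, apply the standard successive approximation as in the proof of Theorem~\ref{st:volterra}, and transform the result back to the original basis. The analog of~\eref{eq:intEOMintegralOrig} is exactly~\eref{eq:intEOMintegralOrigdec}, because conjugation by~$\mat{\mathfrak U}$ of $\euler^{\mats M \, (t - \tau)}$ gives~$\euler^{\mats \Gamma \, (t - \tau)}$ and similarly for its inverse, while the~$\mat{\mathfrak U}^{-1} \, \mat \Sigma \, \mat{\mathfrak U}$ sandwich reconstitutes~$\mat \Sigma$. Continuity of the kernel follows from continuity of~$J$, so the Peano--Baker iteration converges to the unique continuous solution.

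Writing the iterated solution in the form~$\vec{\mathfrak r}(t) = \mat{\Cal U}'(t, \tau) \, \vec{\mathfrak r}(\tau)$ with $\mat{\Cal U}'$ given by the obvious analog of~\eref{eq:timeevmat}, undoing the substitution~$\vec{\mathfrak q}(t) = \euler^{\mats M \, (t - \tau)} \, \vec{\mathfrak r}(t)$, and finally conjugating back to the original basis via~$\vec{\mathfrak p}(t) = \mat{\mathfrak U} \, \vec{\mathfrak q}(t)$ yields precisely the closed form claimed in the theorem, with each nested kernel equal to $\euler^{-\mats \Gamma \, (t' - \tau)} \, \mat \Sigma \, J(t') \, \euler^{\mats \Gamma \, (t' - \tau)}$ as in~\eref{eq:timeevoldec}.

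The only genuinely new technical point compared to Theorem~\ref{st:volterra} is checking that the sandwich conjugations commute correctly with the transition between the eigenbasis of~$\mat \Gamma$ and the original basis; this is the step that I would be most careful to verify in the write-up, since all the remaining ingredients (diagonalizability of~$\mat \Gamma$, continuity of the kernel, convergence and uniqueness of the Peano--Baker series) are already in place. Everything else is bookkeeping parallel to the previous theorem.
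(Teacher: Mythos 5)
Your proposal is correct and follows essentially the same route as the paper, which proves Theorem~\ref{st:volterradec} by explicitly declaring that the line of argument of Theorem~\ref{st:volterra} is repeated with the roles of~$\mat \Sigma \, J(t)$ and $\mat \Gamma$ interchanged: transform to the eigenbasis of~$\mat \Gamma$, derive the Volterra integral equation~\eref{eq:intEOMintegralOrigdec}, and solve it by successive approximation. Your intermediate steps (the substitution~$\mathfrak q_i(t) = \euler^{\mu_i (t-\tau)} \mathfrak r_i(t)$, the kernel~$\mat \Xi'(t)$, and the conjugation back to the original basis) are exactly the ones the paper leaves implicit.
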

As before, the analytical solution of the rate
equations~\eref{eq:timeevoldec} showcases the impact of
decay for rendering the absorption of x~rays nonlinear.
Theorem~\ref{st:volterradec} becomes Theorem~\ref{st:decayeq} by
approximatively neglecting decay in the course of the interaction
with the x~rays, \ie, by letting~$\mat \Gamma = \mat 0$
in~$\mat{\Cal U}'(T,\tau)$ for a fixed time~$T$ and
letting~$\mathfrak T = \tau$.

The derivation that has lead to Theorems~\ref{st:volterra} and
\ref{st:volterradec} of this subsection resembles in many aspects the derivation
of the time-dependent perturbation series in quantum field theory on
pages~56--58 of Ref.~\onlinecite{Fetter:MP-71}.
Thereby, Eq.~\eref{eq:reqcoupled} is structurally similar to the time-dependent
Schr\"odinger equation for an interacting quantum system, however,
with the crucial difference that in our case the time derivative
is not multiplied by the imaginary unit.
Then $\mat \Xi(t)$ [\eref{eq:timepert}]~stands for the time-dependent
perturbation in the interaction picture and
$\mat{\Cal U}(t,\tau)$~corresponds to the time-evolution operator.
Note that for Theorem~\ref{st:volterra}, in contrast to time-dependent
perturbation theory, the exactly solvable part [first term on the
right-hand side of Eq.~\eref{eq:reqcoupled}] is time dependent via~$J(t)$.
In time-dependent perturbation theory, one proceeds by
introducing time-ordered products and expressing the
series~\eref{eq:timeevol} in terms of an exponential
function.~\cite{Fetter:MP-71}

\section{Results and discussion}
\label{sec:results}

I illustrate the theory of the previous Secs.~\ref{sec:ratewodecay} and
\ref{sec:decayeqn} by applying it to the rate equations of a nitrogen
atom, which has $N=7$~electrons, in LCLS radiation.
All details on the atomic electronic structure and the rate equations
can be found in Ref.~\onlinecite{Buth:UA-12}.
I assume an \xray~photon energy of~$840 \eV$ and a fluence
of~$8 \E{11} \U{\tfrac{photons}{\mu m^2}}$.
The temporal pulse shape of the \xray~flux is Gaussian (See Eq.~(7) in
Ref.~\onlinecite{Buth:UA-12}) and the spatial dependence is assumed
to be constant.
Unless stated otherwise, the \xray~pulse has a full width at half
maximum~(FWHM) duration of~$2 \U{fs}$.
I calculate the probability to find the atomic cationic state with
charge~$j \in \{ 0, \ldots, N \}$ via
\begin{equation}
  W_j(t) = \Sum_{i=1}^{K^{(N-j)}} P_i^{(N-j)}(t) \; .
\end{equation}
This quantity allows one to calculate the experimentally mensurable
ion yields from which follows the average charge state that is an
indicator of the amount of charge found on the
ions.~\cite{Buth:UA-12,Liu:RE-16,Buth:NX-17}

\begin{figure}
  \includegraphics[clip,width=\hsize]{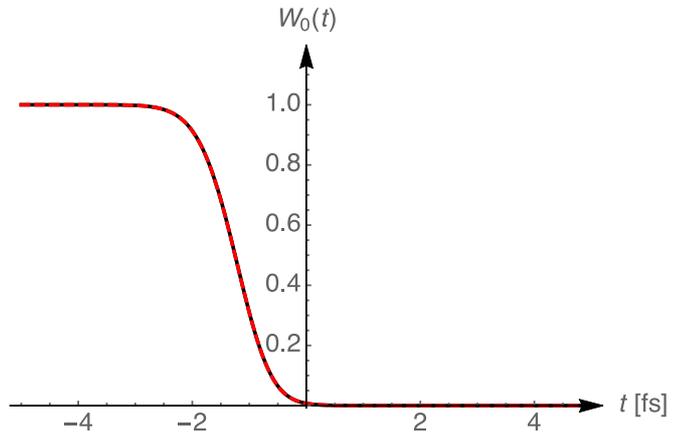}
  \caption{(Color) Ground-state depletion~$W_0(t)$ at time~$t$ of a nitrogen
           atom with the
           numerically-exact solution (solid-black lines) and
           the linearity theorem~\eref{eq:pertexp} (dashed-red lines).}
  \label{fig:ground_linth}
\end{figure}

In \fref{fig:ground_linth}, I plot the numerically-exact solution of the
rate equations for a nitrogen atom together with the result from
the linearity theorem~\eref{eq:pertexp}.
Both curves are indistinguishable which is not surprising as the
rate-equation for ground-state depletion~\eref{eq:reqground} does
not contain any decay terms.

\begin{figure}
  \includegraphics[clip,width=\hsize]{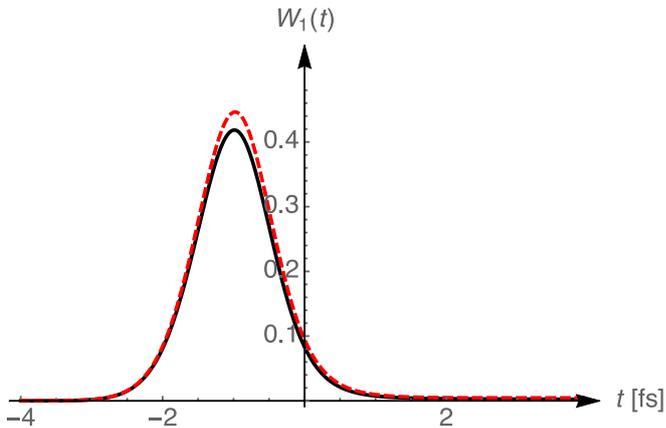}
  \caption{(Color) Single ionization~$W_1(t)$ at time~$t$ of a nitrogen
           atom.
           Line styles as in \fref{fig:ground_linth}.}
  \label{fig:singly_linth}
\end{figure}

Single ionization is investigated in \fref{fig:singly_linth} and, again,
a good agreement with the reference curve is found.
However, this time, there is a decay term in the rate
equations~\eref{eq:reqgam} which leads to a easily discernible deviation
between both curves at the peak.
The fact that the long-term behavior of both curves agrees so well
can only be ascribed to the fact that further ionization of
the atom leads to the observed decay and not Auger or radiative
decay as they are not described by the linearity theorem~\eref{eq:pertexp}.

\begin{figure}
  \includegraphics[clip,width=\hsize]{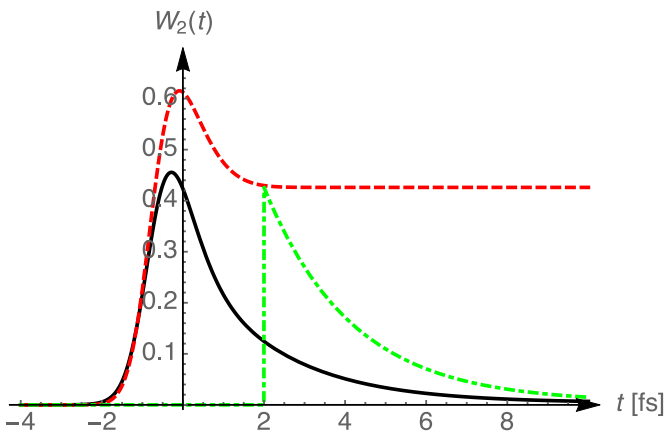}
  \caption{(Color) Double ionization~$W_2(t)$ at time~$t$ of a nitrogen
           atom with the numerically-exact solution (solid-black lines),
           the linearity theorem~\eref{eq:pertexp} (dashed-red lines),
           and the decay equation~\eref{eq:decayeqn} (dotdashed-green lines).}
  \label{fig:doubly_dec2fs}
\end{figure}

\begin{figure}
  \includegraphics[clip,width=\hsize]{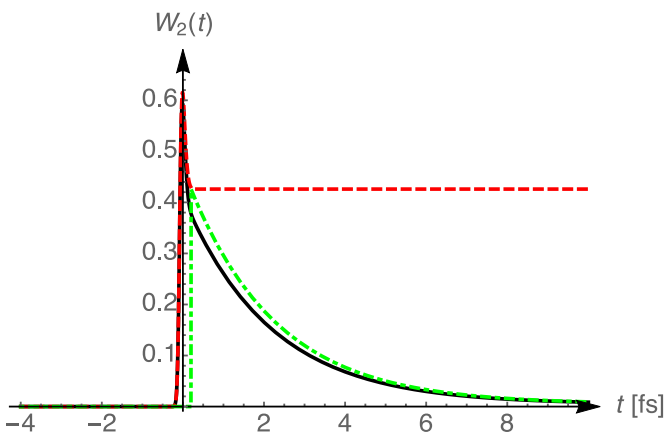}
  \caption{(Color) Double ionization~$W_2(t)$ as in \fref{fig:doubly_dec2fs}
           but for an \xray~pulse with a FWHM duration of~$200 \U{as}$.}
  \label{fig:doubly_dec200as}
\end{figure}

The decay equations~\eref{eq:decayeqn} are used in \fref{fig:doubly_dec2fs}
and \fref{fig:doubly_dec200as} for \xray~pulses with a FWHM duration
of~$2 \U{fs}$ and $200 \U{as}$, respectively, to examine double ionization.
Thereby, I choose the time when the decay processes start~$\mathfrak T$ to be
the FWHM duration but the time~$T$ at which the linearity
theorem~\eref{eq:pertexp} is evaluated is chosen larger than~$\mathfrak T$.
Clearly, the curve from the decay equations in \fref{fig:doubly_dec2fs}
shows a significant improvement over the result from the linearity
theorem but does not approximate the numerically-exact solution acceptably.
This changes dramatically if the FWHM pulse duration is decreased
by an order of magnitude to~$200 \U{as}$ in \fref{fig:doubly_dec200as}
where a good agreement is achieved.
After the pulse is over the double ionization probability rises on
the time scale of Auger decay of core holes of~$6.7 \U{fs}$~\cite{Buth:UA-12}
which turns singly ionized nitrogen atoms into doubly ionized ones.
Noteworthy is that this good agreement seen for double ionization
does not remain so for charge states of~N$^{4+}$ and higher~\cite{SuppData}.
For them, to be accurately described by the decay equation, one needs
to shorten the pulse duration even further in order to reduce
the interplay between photoionization and decay processes.
However, I need to stress that even assuming a $200 \U{as}$~pulse
is stretching the validity of the rate-equation approximation too much.
For such short pulses, coherence effects become important.~\cite{Li:CR-16}
The upshot of the analysis of the decay equations~\eref{eq:decayeqn} in
this paragraph is that they are of limited use in the \xray~regime.

\begin{figure}
  \includegraphics[clip,width=\hsize]{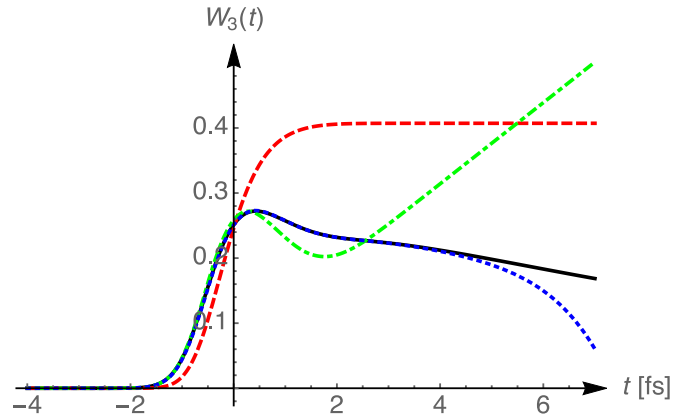}
  \caption{(Color) Triple ionization~$W_3(t)$ at time~$t$
           of a nitrogen atom with the numerically-exact solution
           (solid-black lines), the linearity theorem~\eref{eq:pertexp}
           (dashed-red lines), and the successive
           approximation~\eref{eq:succapp} in first (dash-dotted-green)
           and eighth order (dotted-blue).}
  \label{fig:third_succ}
\end{figure}

Triple ionization of a nitrogen atom is displayed in~\fref{fig:third_succ}
for the results from the linearity theorem, the successive approximation,
and the numerically-exact solution.
The linearity theorem produces a curve which only very crudely
follows the behavior of the numerically-exact solution.
Clearly, the absence of decay processes in the approximation
leads to the overpopulation of the triple-ionization channels
for~$t > 0$.
Furthermore, after the pulse is over, the probability remains constant
whereas the numerically-exact solution slopes downward due to decay
processes.
Taking decay into account in terms of the successive
approximation~\eref{eq:succapp} in first order leads to a good
agreement up to~$\approx 0.5 \U{fs}$.
Afterwards the approximation quickly deteriorates.
However, going to eighth order in the approximation of the
Volterra integral equation improves the result dramatically
such that it agrees with the numerically-exact solution
up to~$\approx 4.5 \U{fs}$.
Still higher orders in the successive approximation are
required to approximate the numerically-exact result beyond this time.

One needs to consider several terms in the successive approximation
to reach converged solutions of the Volterra integral equation.
As can be seen in~\fref{fig:third_succ}, the first-order produces
already good agreement for short times.
However, to approximate the numerically-exact solution for
longer times, higher orders are required.
This can be understood by considering very short pulses for
which the solution of the decay equation~\eref{eq:decayeqn}
is acceptable.
Then the iteration of the Volterra integral equation needs to reproduce
the terms of the series of the exponential function
with the decay matrix in~\eref{eq:decayeqn}
up to sufficiently high order to obtain good agreement.
The larger the time, the higher orders in the series need to be included.

\section{Conclusion}
\label{sec:conclusion}

I carry out a formal analysis of multiphoton absorption which is
called simultaneous, if it cannot be split into individual one- or
few-photon absorptions but has to be described by the full
expressions for the few-photon cross section.
If a rate equation approximation turns out to be satisfactory, I
can also distinguish linked multiphoton absorption---which are rendered
nonlinear by the decay of intermediate states---and separable
multiphoton absorption which depends only on the fluence the atom or molecule
was subjected to---just like one-photon absorption.
I turn to decay processes and analyze the situation under the assumption
that the \xray~pulse is so short that one needs not consider decay
in the course of it.
Then the ensuing decay can be treated independently and an analytic
expression is obtained for the probabilities to find the atom in
specific states after the pulse is over.
Finally, I solve the coupled rate equations that describe the joint
process of \xray~absorption and decay.
This is achieved by recasting the problem in terms of a Volterra integral
equation of the second kind which is amenable to a solution by
successive approximation.
I apply the equations to a nitrogen atom in LCLS x~rays which reveals
that decay processes are crucial for an accurate description and the decay
equation is acceptable only for very short pulses which makes it
unattractive for applications in the \xray~regime.
Yet for situations where the decay widths are considerably smaller
than the case studied here, \eg, in the ultraviolet
regime, the short-pulse
approximation [Theorem~\ref{st:decayeq}] shall be much better and
even Theorem~\ref{st:linearity} may find its use for approximating
the quantum dynamics on short time scales.
Clearly, if decay is absent, \eg, for the sequential absorption of
outer valence electrons in ultraviolet light,
Theorem~\ref{st:linearity} is the solution of the rate equations.
In any case, the successive approximation provides a reliable
and robust method to solve the rate equations.
Already the first-order approximation to the solution of the Volterra equation
is useful for small decay widths and limited duration, \eg, for computing
\xray~diffraction of ultrashort pulses.~\cite{Son:HA-11,*Son:EH-11}
However, higher orders need to be taken into account, if a longer time
evolution is desired.

The presented research opens up rich perspectives for future work.
The formulation of multiphoton absorption that has been devised here
can be used in formal developments and practical applications.
It provides a different viewpoint on the solution of the rate equations
because the expression for the solution incorporates the
underlying physics already in contrast to
the conventional methods for numerically-solving systems
of ordinary differential equations.

There are a two major challenges, however, involved in applying
Theorems~\ref{st:linearity}, \ref{st:decayeq}, \ref{st:volterra},
and \ref{st:volterradec} in practical computations.
First, the matrix exponentials need to be evaluated which requires
careful numerical analysis.~\cite{Golub:MC-96,Moler:DW-03}
Second, depending on the number of terms taken into account
in Eq.~\eref{eq:timeevol}, repeated numerical integrations are necessary.
This is, of course, only the case, if Eq.~\eref{eq:timeevol} is applied;
the mathematics of Volterra integral equations~\eref{eq:intEOMintegralOrig}
and \eref{eq:intEOMintegralOrigdec}
is well researched and there are alternative methods
available which may prove suitable for the problem at
hand.~\cite{Burton:Vi-83,Linz:AN-85,Bakushinskii:VE-11}
Both points need to be investigated further in order to turn
this approach into a viable method in practice.
Equation~\eref{eq:intEOMintegralOrigdec} has the advantage over
Eq.~\eref{eq:intEOMintegralOrig} that the integral needs to be evaluated
only for the duration of the \xray~pulse.
Specifically, the systems of rate equations for heavy atoms become
huge~\cite{Fukuzawa:DI-13,Ho:TT-14} such that a complete numerical
solution is rendered impracticable.
Therefore, a Monte Carlo method is used in
Refs.~\onlinecite{Fukuzawa:DI-13,Ho:TT-14} to solve the rate equations for which
all atomic quantities, \ie, cross sections and decay widths are computed
beforehand.~\cite{Rudek:UE–12,Son:MC-12,*Son:EM-15}
For very heavy atoms and ionization of deep inner shells, even this is not
enough as the computation of all involved atomic quantities becomes intractable.
Hence the Monte Carlo approach is extended to also decide
which quantities are to be calculated.~\cite{Fukuzawa:DI-13}
Theorems~\ref{st:linearity}, \ref{st:decayeq}, \ref{st:straight},
\ref{st:volterra}, and \ref{st:volterradec} provide a different
perspective on the problem.
Namely, the concept of Monte Carlo methods is replaced by
matrix methods, \ie, the repeated solution of the rate equations
involving random numbers is changed to an examination of matrix times
vector products for sparse matrices.~\cite{Golub:MC-96}
Decisions whether to compute certain atomic quantities need to
be made based on their relevance to these products where suitable
criteria need to be devised.

The rate equations in this work are restricted to non-resonant
one-\xray-absorption for simplicity and because it is the most
important case for practical applications which rely frequently
on \xray~diffraction.
This limitation, however, can be lifted straightforwardly.
One can include also REXMI terms,~\cite{Rudek:UE–12,Rudek:RE-13}
if the rate-equation approximation remains valid for these
cases.~\cite{Delone:MP-00}
The solution of the rate equations with decay, Theorems~\ref{st:volterra}
and \ref{st:volterradec}, can be extended, \eg, to include simultaneous
two-photon absorption~\cite{Goppert:EZ-31,Rohringer:XR-07,Doumy:NA-11}
by augmenting the decay part in~\eref{eq:vecreqgam} by appropriate terms.
The other way round is the case if there is no one-photon
absorption, \eg, due to a too low photon energy.
Then the analysis of Theorems~\ref{st:linearity}, \ref{st:decayeq},
\ref{st:straight}, \ref{st:volterra}, and \ref{st:volterradec}
can be done analogously starting with the lowest order of simultaneous
multiphoton absorption instead.
So far mostly configurations have been used in the literature
to write down the rate equations.
There is, however, no restriction that prevents one to use
fine-structure-resolved states instead.
This becomes relevant when heavier atoms are considered where relativistic
effects are prominent.~\cite{Rudek:UE–12,Fukuzawa:DI-13}

\begin{acknowledgments}
Thanks to Lorenz S.~Cederbaum, Simona Scheit, and Jochen Schirmer
for helpful discussions and a critical reading of the manuscript.
\end{acknowledgments}

\appendix
\section{Diagonalizability of a closed, lower triangular matrix}

\begin{lemma}
\label{st:diag}
Given a real, lower triangular $K \times K$~matrix~$\mat A$ whose nonzero
eigenvalues---the nonzero diagonal elements---are pairwise distinct;
the eigenvalue~0 may occur multiple times.
Furthermore, $\mat A$ is a closed essentially nonnegative matrix.
Then $\mat A$ is diagonalizable and the eigenvectors corresponding
to eigenvalues~0 are given by Cartesian eigenvectors~$\vec e_i$
where $i \in \{1, \ldots, K\}$ and the value~1 in these vectors is at
the row index~$i$ of the eigenvalue~0 of~$\mat A$.
\end{lemma}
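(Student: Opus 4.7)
The plan is to reduce diagonalizability to a matching of algebraic and geometric multiplicities for each eigenvalue, treating nonzero and zero eigenvalues separately, and to exploit the combination of closure and essential nonnegativity to locate explicitly the eigenvectors for the zero eigenvalue.

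First, since $\mat A$ is lower triangular, the spectrum consists of its diagonal entries. By hypothesis, the nonzero diagonal entries are pairwise distinct, so each nonzero eigenvalue is simple and therefore has geometric multiplicity equal to its algebraic multiplicity $1$. The only eigenvalue whose diagonalizability is nontrivial is the eigenvalue~$0$, whose algebraic multiplicity $m$ equals the number of indices $i$ with $A_{ii}=0$.

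Next I would extract structural information about the columns associated with zero diagonal entries. Fix $i$ with $A_{ii}=0$. Because $\mat A$ is lower triangular, $A_{ji}=0$ for all $j<i$. The closure property from Definition~\ref{st:closure} reads
\begin{equation}
  -A_{ii} = \Sum_{\atopa{\scriptstyle j=1}{\scriptstyle j \neq i}}^K A_{ji}
         = \Sum_{j>i} A_{ji} = 0 .
\end{equation}
Essential nonnegativity forces $A_{ji}\ge 0$ for every $j\neq i$, so each summand in the last sum must vanish individually: $A_{ji}=0$ for all $j>i$. Combined with the lower-triangular zeros above the diagonal, the entire $i$-th column of $\mat A$ is zero. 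Consequently $\mat A\,\vec e_i = \vec 0$, identifying the claimed Cartesian eigenvector.

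With this in hand, the $m$ vectors $\vec e_i$ ranging over the indices $i$ with $A_{ii}=0$ are manifestly linearly independent, so the geometric multiplicity of the eigenvalue~$0$ is at least $m$; but it is also at most $m$ (the algebraic multiplicity), hence exactly $m$. Adding $1$ for each simple nonzero eigenvalue, the total of geometric multiplicities equals $K$, which is the standard criterion for diagonalizability. The desired form of the eigenvectors for the zero eigenvalue is exactly what the argument delivered. The only step requiring genuine thought is the one in the previous paragraph: extracting from the scalar identity $\sum_{j>i}A_{ji}=0$ the vanishing of every subdiagonal entry in column~$i$, which hinges crucially on essential nonnegativity; the rest is bookkeeping on multiplicities of a triangular matrix.
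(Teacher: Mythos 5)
Your proposal is correct and follows essentially the same route as the paper's own proof: deduce from closure plus essential nonnegativity that every column with a zero diagonal entry vanishes entirely, read off the Cartesian eigenvectors $\vec e_i$ for the eigenvalue~$0$, and combine this with the simplicity of the pairwise distinct nonzero eigenvalues to match algebraic and geometric multiplicities. You merely spell out more explicitly the step the paper states tersely, namely that the scalar identity $\Sum_{j>i} A_{ji}=0$ together with $A_{ji}\geq 0$ forces each subdiagonal entry of that column to vanish individually.
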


\begin{proof}
As $\mat A$ is closed~\eref{eq:closure} and essentially nonnegative,
for $A_{ii} = 0$ follows that $A_{ji} = 0$ for all~$j$, \ie, the
entire column of~$\mat A$ is filled with zeros.
Consequently, I have
\begin{equation}
  \mat A \, \vec e_i = A_{ii} \, \vec e_i = \vec 0 \; .
\end{equation}
Thus for all eigenvalues~0 of~$\mat A$ I have found eigenvectors, \ie,
the algebraic multiplicity in the characteristic polynom due to eigenvalue~0
corresponds to its geometric multiplicity.
All other eigenvalues are pairwise distinct.
Hence the characteristic polynom contains only linear factors with respect to
these eigenvalues.
Thus $\mat A$~is diagonalizable.~\cite{Fischer:LA-14}
\end{proof}

\end{document}